%% file: main.tex
\documentclass[11pt]{article}
\usepackage[utf8]{inputenc}
\usepackage[T1]{fontenc}
\usepackage{amsmath, amssymb, amsthm, mathtools}
\usepackage{booktabs}
\usepackage{graphicx}
\usepackage{xcolor}
\usepackage{hyperref}
\usepackage{natbib}
\usepackage{enumitem}
\usepackage{algorithm}
\usepackage{algpseudocode}
\usepackage{caption}
\usepackage{tikz}
\usetikzlibrary{arrows.meta, positioning, shapes.geometric, fit, calc, decorations.pathreplacing}
\usepackage{geometry}
\newtheorem{proposition}{Proposition}
\newtheorem{theorem}{Theorem}

\newtheorem{lemma}{Lemma}

\newcommand{\Rfrac}{x_R}

\title{Delayed Repression and Emergent Instability in Adaptive Multi-Agent Systems}
\author{Igor Itkin\\
\small Independent Researcher, Tel Aviv, Israel\\
\small \texttt{ig.itkin@gmail.com}\\
\small ORCID: \href{https://orcid.org/0009-0004-9513-8463}{0009-0004-9513-8463}}
\date{May 2026}

\begin{document}
\maketitle

\begin{abstract}
Regulatory institutions (from content moderation platforms to financial supervisors) observe, deliberate, and intervene only after a characteristic delay. We ask whether this processing lag alone can destabilize a multi-agent system that would otherwise remain stable, without exogenous shocks, coordination among agents, or malicious actors. We study this in two stages. First, we analyze a delayed replicator equation in which autonomous agents benefit from radical behavior but face punishment based on a lagged institutional alarm signal. We derive a closed-form critical delay beyond which the unique interior equilibrium loses stability through a Hopf bifurcation, and prove via center manifold reduction that the bifurcation is supercritical (bounded oscillations, not explosive growth) for the entire sigmoid response family. Second, we embed $N=240$ agents on a network with reinforcement learning (tabular Q-learning) and cross institutional delay with three decision architectures: fixed-policy, reactive (a memoryless threshold heuristic), and Q-learning. The hierarchy is opposite to the naive expectation that learning amplifies instability. Reactive agents are perfectly stable without delay yet collapse once delay is introduced (96\% runaway by delay$\,{\geq}\,8$); fixed-policy agents are immune (0\% at all delays); Q-learning agents are only partially resilient (66\% at delay$\,{=}\,20$). The destabilizing ingredient is reactivity to delayed signals, not learning: agents that immediately exploit low-alarm windows trigger oscillatory feedback loops, while learning buffers this through punishment memory encoded in value functions. Throughout, ``runaway'' denotes bounded large-amplitude oscillation crossing a radical-fraction threshold, consistent with the supercritical bifurcation, not unbounded growth.
\end{abstract}

\input{sections/01_introduction.tex}
\input{sections/02_model_theory.tex}
\input{sections/03_network_model.tex}
\input{sections/04_experiments.tex}
\input{sections/06_discussion.tex}
\input{sections/07_conclusion.tex}

\section*{Acknowledgments}
The author thanks Ilya Makarov for valuable feedback on the manuscript.

\bibliographystyle{plainnat}
\bibliography{main}

\appendix
\input{sections/appendix_hopf_proof.tex}
\input{sections/appendix_experiment_details.tex}

\end{document}

%% file: sections/01_introduction.tex
\section{Introduction}

Regulatory bodies observe aggregate statistics, deliberate, and intervene only after a characteristic delay. This paper asks what happens in an \emph{adaptive multi-agent system} (a population of agents that modify their behavior in response to observed rewards and punishments) when institutional feedback arrives with such a delay. The answer, at least in our model, is that delay alone can destabilize an otherwise stable system: no exogenous shock is needed. The replicator equation \citep{taylor1978evolutionary, hofbauer1998evolutionary} provides the analytical framework, and network structure shapes the conditions under which autonomy persists \citep{nowak2006five, szabo2007evolutionary, lieberman2005evolutionary}, but the core mechanism we study is temporal: a lag between what agents do and when the institution responds.

\begin{figure}[t]
\centering
\begin{tikzpicture}[
    node distance=1.2cm and 1.8cm,
    box/.style={draw, rounded corners, minimum width=2.2cm, minimum height=0.7cm, align=center, font=\small},
    arr/.style={-{Stealth[length=6pt]}, thick},
    label/.style={font=\footnotesize\itshape, text=gray!70!black},
]

\begin{scope}[local bounding box=left]
\node[font=\small\bfseries] (titleL) {(a) No delay ($\Delta=0$)};
\node[box, below=0.5cm of titleL] (agentsL) {Agents\\choose actions};
\node[box, right=1.4cm of agentsL] (alarmL) {Alarm\\$A(t)$};
\node[box, below=0.8cm of alarmL] (instL) {Institution\\sets $p(t)$};
\node[box, below=0.8cm of agentsL] (punishL) {Punishment\\$\xi_i(t)$};

\draw[arr] (agentsL) -- node[above, label] {radical fraction} (alarmL);
\draw[arr] (alarmL) -- node[right, label] {current state} (instL);
\draw[arr] (instL) -- node[below, label] {repression} (punishL);
\draw[arr] (punishL) -- node[left, label] {cost $\to$ retreat} (agentsL);

\node[below=0.3cm of punishL, font=\small, text=green!50!black] (resultL) {$\longrightarrow$ Stable equilibrium $x^*$};
\end{scope}

\begin{scope}[xshift=8.5cm, local bounding box=right]
\node[font=\small\bfseries] (titleR) {(b) With delay ($\Delta>\Delta_c$)};
\node[box, below=0.5cm of titleR] (agentsR) {Agents\\choose actions};
\node[box, right=1.4cm of agentsR] (alarmR) {Alarm\\$A(t{-}\Delta)$};
\node[box, below=0.8cm of alarmR] (instR) {Institution\\sets $p(t)$};
\node[box, below=0.8cm of agentsR] (punishR) {Punishment\\$\xi_i(t)$};

\draw[arr] (agentsR) -- node[above, label] {radical fraction} (alarmR);
\draw[arr, red!70!black, dashed] (alarmR) -- node[right, label, text=red!70!black] {\textbf{stale} state} (instR);
\draw[arr] (instR) -- node[below, label] {repression} (punishR);
\draw[arr] (punishR) -- node[left, label] {cost $\to$ retreat} (agentsR);

\node[below=0.3cm of punishR, font=\small, text=red!60!black] (resultR) {$\longrightarrow$ Oscillations / runaway};
\end{scope}

\draw[gray, dashed] ($(left.north east)+(0.6,0.3)$) -- ($(left.south east)+(0.6,-0.3)$);

\end{tikzpicture}
\caption{The delay-destabilization mechanism. (a)~Without delay, the institution observes the current state and adjusts repression in real time; negative feedback drives the system to a stable equilibrium $x^*$. (b)~When institutional processing introduces delay $\Delta$, the alarm signal is stale: agents have already changed behavior by the time repression arrives. Above a critical threshold $\Delta_c$, this stale feedback generates oscillations or runaway. The paper derives $\Delta_c$ analytically (Section~2) and tests which agent architectures are most vulnerable (Section~4).}
\label{fig:mechanism}
\end{figure}

Several lines of work address time delays in replicator dynamics. \citet{kuang1993delay} established the mathematical framework for delay differential equations in population dynamics. In evolutionary game theory, \citet{alboszta2004stability} showed that time delay can destabilize evolutionarily stable strategies in discrete replicator dynamics, while \citet{wesson2016hopf} proved the existence of Hopf bifurcations in two-strategy delayed replicator equations and characterized the critical delay at which limit cycles emerge. The effect is not universal: \citet{iijima2012delayed} found that in a class of discrete delayed dynamics the stability of the mixed equilibrium is preserved independent of the delay, showing that whether delay destabilizes is model-dependent. \citet{mittal2020delayed} analyzed the delayed replicator-mutator equation, showing how delay generates limit cycles even in cooperative regimes. The upshot is that delay can qualitatively change stable equilibria into oscillatory or unstable ones, but the outcome depends on the specific response structure. We build on this line of work because it provides the analytical foundation (the Hopf bifurcation framework for delayed replicator equations) that we extend to nonlinear institutional response functions and validate in an agent-based simulation.

Despite this analytical foundation, the interaction between delayed institutional feedback and adaptive learning agents on structured networks has not been studied. Most analytical results assume well-mixed populations with fixed strategy revision protocols \citep{miekisz2008evolutionary}. Real multi-agent systems, however, involve agents that learn from experience, interact on heterogeneous graphs, and face punishment signals that propagate through institutional channels with variable latency. The literature on evolutionary dynamics on graphs \citep{ohtsuki2006simple, santos2005scalefree, perc2013evolutionary, perc2017statistical, mcavoy2022social} and on multi-agent reinforcement learning \citep{sutton2018reinforcement, zhang2021marl_survey, gronauer2022marl} have proceeded largely separately. Recent work on reward delays in single-agent and multi-agent reinforcement learning (RL) \citep{bouteiller2020delays, zhang2023marl} addresses convergence properties but not the evolutionary stability questions that arise when delayed feedback drives population-level regime transitions. We draw on both traditions because the gap lies at their intersection: delayed institutional feedback (from evolutionary game theory) has not been combined with adaptive learning agents (from multi-agent RL) on structured networks. How does the delay-instability mechanism manifest when agents learn from experience rather than following fixed revision protocols?

Our central finding is counterintuitive: adaptive learning does not amplify delay-induced instability---it partially buffers it. The destabilizing ingredient is not learning but reactivity to delayed signals. Agents that react immediately to low-alarm windows exploit them and trigger oscillatory feedback loops; agents that learn from cumulative punishment history resist this trap. We reach this conclusion through a two-stage approach. First, we analyze the delayed replicator equation with a nonlinear sigmoid response function, specializing the Hopf bifurcation framework of \citet{wesson2016hopf} to derive a closed-form critical delay $\Delta_c$ and prove via center manifold reduction that the bifurcation is supercritical for the entire admissible sigmoid parameter class (Proposition~\ref{prop:supercritical}). This guarantees bounded limit cycles rather than explosive instability above the critical threshold. We validate the analytical result through numerical ordinary differential equation (ODE) integration and a discrete mean-field bridge connecting the continuous theory to the agent-based simulation.

Second, we construct a networked multi-agent simulation with $N=240$ reinforcement learning agents on a modular graph and run a factorial experiment crossing delay with agent decision architecture. Three architectures are compared: non-reactive agents (fixed policy), reactive agents (threshold heuristic without memory), and Q-learning agents (tabular reinforcement learning with cumulative value estimates). The results reveal a clear hierarchy. Reactive agents are the cleanest demonstration of the central claim: with no delay they are fully stable (100\% stable at $\Delta=0$), and delay alone drives them to catastrophic runaway (96\% by delay$\,{\geq}\,8$). Non-reactive (fixed-policy) agents are immune (0\% runaway at all tested values), confirming that the feedback loop is necessary. Q-learning agents achieve partial resilience (66\% at delay${}\,{=}\,20$) by encoding historical punishment into their value functions, though they carry an intrinsic oscillatory tendency at $\Delta=0$ (only ${\approx}42\%$ stable) that the memoryless reactive baseline lacks.

The reduced ODE is a caricature, not a forecast. It identifies a local instability mechanism; the simulations then test whether that mechanism survives contact with a realistic population of adaptive agents on a heterogeneous network. The key empirical question is how different agent architectures interact with delay, and the two-factor design reveals that reactivity, not learning per se, is the destabilizing ingredient.

Section~2 develops the analytical theory (delayed replicator equation, critical delay, Hopf bifurcation proof). Section~3 describes the networked simulation model, including agent learning dynamics and network topologies. Section~4 presents the experiments and results. Section~5 discusses implications and limitations. The companion paper extends the setting to noisy selective control on modular networks. Code and data: \url{https://github.com/YehudaItkin/delayed-repression-instability}.

%% file: sections/02_model_theory.tex
\section{Theory: Delayed Repression Dynamics}

\subsection{Problem formulation}

We consider a population of $N$ agents that choose between conformist and autonomous (radical) behavior over discrete time steps. An institution observes the population state, but with a processing delay of $\Delta$ steps. Based on the delayed observation, the institution sets a repression probability that imposes costs on radical agents. Formally:
\begin{itemize}
\item \textbf{Input:} system parameters $(N, \Delta, k, \text{agent architecture})$, where $k$ is the institutional response sharpness.
\item \textbf{State:} the radical fraction $x(t) \in [0,1]$ at each time step $t$.
\item \textbf{Dynamics:} the delayed replicator equation $\dot{x}(t) = x(t)(1{-}x(t))[a - C\,p(x(t{-}\Delta))]$, where $p(\cdot)$ is the institutional response function and $a = B - S$ is the net autonomy advantage.
\item \textbf{Question:} for what values of $\Delta$ does the unique interior equilibrium $x^*$ lose stability?
\end{itemize}
We formalize this question in a reduced mean-field model (this section), derive the stability boundary $\Delta_c$ in closed form, and then test whether the mechanism survives in a full agent-based simulation where agents maximize cumulative discounted reward $\sum_{t} \gamma^t r_i(t)$ via Q-learning (Section~3).

\subsection{Reduced mean-field model}

The analytical model builds on the replicator equation framework introduced by \citet{taylor1978evolutionary} and developed extensively in \citet{hofbauer1998evolutionary}, incorporating the delay mechanisms analyzed in the general setting by \citet{kuang1993delay} and in evolutionary games specifically by \citet{wesson2016hopf}.

Let $x(t)\in[0,1]$ denote the fraction of agents adopting an autonomous (radical) strategy at time $t$. Non-autonomous agents receive a baseline payoff $S$. Autonomous agents receive benefit $B$, but incur punishment cost $C>0$ with probability determined by a delayed aggregate signal. Define the net autonomy advantage as
\[
    a := B-S.
\]
We study the delayed replicator equation
\begin{equation}
\dot{x}(t)=x(t)(1-x(t))\left[a-Cp(x(t-\Delta))\right],
\label{eq:delayed_rep}
\end{equation}
where $\Delta\ge 0$ is the repression delay and $p\colon[0,1]\to[0,1]$ is a smooth increasing response function representing the probability that the institution activates repression given the observed population state. The logistic growth factor $x(1-x)$ ensures that the dynamics vanish at the population boundaries, as is standard in replicator equations. The key modeling assumption is that the institution observes and responds to the population state at time $t-\Delta$ rather than time $t$, reflecting bureaucratic processing time, information aggregation delays, or deliberation periods.

We model the response function as a sigmoid (logistic function):
\begin{equation}
    p(x)=\sigma(k(x-x_c))=\frac{1}{1+e^{-k(x-x_c)}},
\end{equation}
where $k>0$ controls the sharpness of the institutional response and $x_c\in(0,1)$ is the threshold at which the institution transitions from low to high repression probability. This choice is motivated on three grounds. First, the sigmoid is the standard Fermi update function in stochastic evolutionary game theory \citep{traulsen2006stochastic}, where it parameterizes the sensitivity of strategy revision to payoff differences; here it plays an analogous role for institutional sensitivity to the observed radical fraction. Second, threshold-based responses are empirically characteristic of institutions that tolerate low-level deviation but activate enforcement above a critical level, a pattern common to regulatory agencies, content moderation systems, and immune responses \citep{scheffer2009critical}. Third, given an inflection point $x_c$, the sigmoid is the unique smooth monotone solution of $p'(x)=kp(x)(1-p(x))$ with $p(x_c)=1/2$, making it the natural one-parameter family indexed by sharpness $k$ for any system with logistic-type threshold activation. Larger $k$ corresponds to a more decisive institution that switches abruptly between tolerance and punishment, while smaller $k$ represents a gradual, proportional response.

\subsection{Interior equilibrium and stability without delay}

\begin{proposition}[Interior equilibrium]
The boundary points $x=0$ and $x=1$ are stationary. An interior stationary point $x^*\in(0,1)$ satisfies
\begin{equation}
    p(x^*)=\frac{a}{C}.
\end{equation}
If $p$ is strictly increasing on $[0,1]$, such a point exists and is unique if and only if
\begin{equation}
    p(0)<\frac{a}{C}<p(1).
\end{equation}
For a steep sigmoid with $p(0)\approx0$ and $p(1)\approx1$, this reduces approximately to $0<a<C$.
\end{proposition}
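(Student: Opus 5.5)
The plan is to work directly from the right-hand side of \eqref{eq:delayed_rep}, which at a constant solution $x(t)\equiv x$ reduces to $f(x)=x(1-x)[a-Cp(x)]$. Since a stationary point of the delay equation is a constant function and $x(t-\Delta)=x(t)$ along such a function, the delay plays no role in locating equilibria, so the statement becomes a claim about the zeros of $f$ on $[0,1]$.

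First, the factor $x(1-x)$ vanishes at $x=0$ and $x=1$, so both boundary points are stationary whatever the value of $p$ there. Second, for $x^*\in(0,1)$ we have $x^*(1-x^*)>0$, hence $f(x^*)=0$ if and only if $a-Cp(x^*)=0$. Using $C>0$, this is equivalent to $p(x^*)=a/C$, which is the characterization in the statement.

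For existence and uniqueness, let $p$ be continuous and strictly increasing on $[0,1]$. Suppose $p(0)<a/C<p(1)$. The intermediate value theorem then gives some $x^*\in(0,1)$ with $p(x^*)=a/C$, and strict monotonicity makes $p$ injective, so this $x^*$ is unique. Conversely, suppose an interior solution $x^*\in(0,1)$ exists. Strict monotonicity gives $p(0)<p(x^*)<p(1)$, that is, $p(0)<a/C<p(1)$. If instead $a/C\le p(0)$ or $a/C\ge p(1)$, there is no interior root at all, which confirms the ``only if'' direction. For the sigmoid, $p'(x)=kp(1-p)>0$, so the hypothesis holds automatically. With $p(0)\approx 0$ and $p(1)\approx 1$ for large $k$, the condition becomes approximately $0<a<C$, again because $C>0$.

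None of the steps is a real obstacle. The only points needing care are the exact boundary cases $a/C=p(0)$ or $a/C=p(1)$, where the root sits at an endpoint and is therefore not interior, and the observation that continuity of $p$, which follows from the assumed smoothness, is what licenses the intermediate value theorem.
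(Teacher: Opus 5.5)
Your proof is correct and follows the same route as the paper. You factor the right-hand side, use $x^*(1-x^*)>0$ in the interior to reduce stationarity to $p(x^*)=a/C$, and use strict monotonicity of $p$ for existence and uniqueness. You are slightly more careful than the paper: you point out that existence needs continuity of $p$, through the intermediate value theorem, which the paper's one-line ``strict monotonicity gives existence'' leaves implicit.
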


\begin{proof}
At a stationary point of \eqref{eq:delayed_rep}, either $x=0$, $x=1$, or $a-Cp(x^*)=0$. The latter condition is equivalent to $p(x^*)=a/C$. Strict monotonicity of $p$ gives existence and uniqueness precisely when $a/C$ lies in the image of $p$ restricted to the open interval $(0,1)$.
\end{proof}

At the interior equilibrium, institutional punishment exactly offsets the net benefit of autonomy. For the sigmoid response function, the equilibrium is given explicitly by
\begin{equation}
    x^*=x_c+\frac{1}{k}\log\left(\frac{a}{C-a}\right),
\label{eq:xstar}
\end{equation}
provided this value lies in $(0,1)$.

Having located the interior equilibrium, we next establish that it is locally asymptotically stable whenever repression is instantaneous. Delay is therefore necessary for the instabilities studied in subsequent sections.

\begin{theorem}[Local stability without delay]
\label{thm:nodelay}
Assume an interior equilibrium $x^*\in(0,1)$ exists and $p'(x^*)>0$. For $\Delta=0$, $x^*$ is locally asymptotically stable.
\end{theorem}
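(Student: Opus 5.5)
With $\Delta=0$, equation~\eqref{eq:delayed_rep} becomes the scalar autonomous ODE $\dot x = f(x)$, where
\[
f(x):=x(1-x)\left[a-Cp(x)\right].
\]
The plan is to linearize at $x^*$ and invoke the standard linearization principle for one-dimensional $C^1$ flows: a hyperbolic equilibrium with $f'(x^*)<0$ is locally asymptotically stable. Because $p$ is smooth, $f$ is $C^1$ on $[0,1]$, so the principle applies directly and the proof reduces to computing the sign of $f'(x^*)$.

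\textbf{Step 1: compute the derivative.} By the product rule,
\[
f'(x)=(1-2x)\left[a-Cp(x)\right]-C\,x(1-x)\,p'(x).
\]
By the interior-equilibrium proposition, $a-Cp(x^*)=0$. The first term therefore vanishes, leaving
\[
f'(x^*)=-C\,x^*(1-x^*)\,p'(x^*).
\]

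\textbf{Step 2: determine the sign.} Each factor has a definite sign: $C>0$ by assumption, $x^*(1-x^*)>0$ because $x^*\in(0,1)$, and $p'(x^*)>0$ by hypothesis. Hence $f'(x^*)<0$, and the linearized equation $\dot y=f'(x^*)\,y$ decays exponentially.

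\textbf{Step 3: conclude.} The linearization theorem then gives local asymptotic stability of $x^*$. As an alternative that avoids citing the theorem, I could give an elementary argument. Continuity of $f'$ ensures $f'<0$ on a neighbourhood $(x^*-\delta,x^*+\delta)$, so $f>0$ to the left of $x^*$ and $f<0$ to the right. Trajectories starting in this neighbourhood therefore move monotonically toward $x^*$ and cannot cross it. Equivalently, $V(x)=(x-x^*)^2$ is a strict Lyapunov function there.

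For the sigmoid we can make the rate explicit. Since $p'=kp(1-p)$ and $p(x^*)=a/C$, we have $p'(x^*)=k\,\frac{a}{C}\left(1-\frac{a}{C}\right)$, which gives
\[
f'(x^*)=-k\,x^*(1-x^*)\,\frac{a(C-a)}{C}.
\]
This quantity will be reused in the delayed characteristic equation. There is no real obstacle in this proof; the only point requiring care is using the equilibrium condition to eliminate the $(1-2x)$ term.
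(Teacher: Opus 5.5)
Your proof is correct and follows essentially the same route as the paper: linearize at $x^*$, use the equilibrium condition $a-Cp(x^*)=0$ to eliminate the $(1-2x^*)$ term, and conclude from $-Cx^*(1-x^*)p'(x^*)<0$. The paper writes the linear coefficient as $\partial_xF+\partial_yF$ for $F(x,y)=x(1-x)[a-Cp(y)]$ to set up the delayed case, but the computation is the same as yours.
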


\begin{proof}
Let
\[
F(x,y)=x(1-x)[a-Cp(y)].
\]
With no delay, the system is $\dot{x}=F(x,x)$. Linearize around $x^*$ by writing $x(t)=x^*+u(t)$. The linear coefficient is
\[
A=\partial_xF(x^*,x^*)+\partial_yF(x^*,x^*).
\]
At the interior equilibrium, $a-Cp(x^*)=0$, so $\partial_xF(x^*,x^*)=(1-2x^*)[a-Cp(x^*)]=0$. Hence
\[
A=-Cx^*(1-x^*)p'(x^*)<0,
\]
which implies local asymptotic stability since the linearized equation $\dot{u}=Au$ decays exponentially.
\end{proof}

\subsection{Delay-induced instability}

When $\Delta>0$, the linearization around $x^*$ takes the form of a scalar delay differential equation. Since $\partial_xF(x^*,x^*)=0$ as shown above, the linearized dynamics involve only the delayed term.

\begin{equation}
    \dot{u}(t)=\beta\, u(t-\Delta),
\label{eq:linear_delay}
\end{equation}
where
\begin{equation}
    \beta=-Cx^*(1-x^*)p'(x^*)<0.
\end{equation}
This is a standard scalar delay equation of the form studied in \citet{kuang1993delay}. Let $b=-\beta>0$. Then \eqref{eq:linear_delay} becomes $\dot{u}(t)=-b\,u(t-\Delta)$, which is the Hayes equation whose stability boundary is classically known.

\begin{theorem}[Critical delay and Hopf crossing]
\label{thm:critical_delay}
For the scalar linearized delay equation \eqref{eq:linear_delay}, the interior equilibrium is locally asymptotically stable for
\[
0\le b\Delta < \frac{\pi}{2}.
\]
At
\begin{equation}
    \Delta_c=\frac{\pi}{2Cx^*(1-x^*)p'(x^*)},
\label{eq:deltac}
\end{equation}
the characteristic equation has a simple conjugate pair of purely imaginary roots $\lambda=\pm ib$ that cross the imaginary axis with positive speed as $\Delta$ increases through $\Delta_c$. Thus $\Delta_c$ is the first local stability boundary and the reduced nonlinear delay differential equation (DDE) satisfies the local Hopf crossing conditions. The criticality of this bifurcation is established in Proposition~\ref{prop:supercritical}.
\end{theorem}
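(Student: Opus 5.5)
The plan is to reduce everything to the characteristic equation of the Hayes equation $\dot u(t)=-b\,u(t-\Delta)$. Substituting $u=e^{\lambda t}$ gives
\[
\lambda + b\,e^{-\lambda\Delta}=0 .
\]
The proof then has four steps: stability for small $b\Delta$, location of the imaginary roots, simplicity of the critical root, and transversality of the crossing.

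\textbf{Stability for $b\Delta<\pi/2$.} At $\Delta=0$ the only root is $\lambda=-b<0$, which matches Theorem~\ref{thm:nodelay}. For $\Delta>0$ the characteristic function is a quasi-polynomial of retarded type. Its roots therefore depend continuously on $\Delta$, and for $\Delta\to0^+$ the infinitely many new roots enter from $\operatorname{Re}\lambda=-\infty$. Hence roots can reach the closed right half-plane only by crossing the imaginary axis.

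\textbf{Imaginary roots.} $\lambda=0$ is never a root, since $b>0$. Put $\lambda=i\omega$ with $\omega>0$ (conjugate symmetry handles $\omega<0$). Separating real and imaginary parts gives
\[
b\cos(\omega\Delta)=0,\qquad \omega=b\sin(\omega\Delta).
\]
So $\omega=b$ and $\omega\Delta=\pi/2+2n\pi$, $n\ge0$. The smallest such $\Delta$ is $\Delta_c=\pi/(2b)$. For $b\Delta<\pi/2$ no crossing has yet occurred, so all roots lie in the open left half-plane. This gives asymptotic stability of the linearization, and local asymptotic stability of $x^*$ for the nonlinear DDE by the linearized stability principle for retarded equations \citep{kuang1993delay}. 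Substituting $b=Cx^*(1-x^*)p'(x^*)$ yields \eqref{eq:deltac}.

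\textbf{Simplicity.} Write $h(\lambda)=\lambda+be^{-\lambda\Delta}$. Then $h'(\lambda)=1-b\Delta e^{-\lambda\Delta}$. At a root, $be^{-\lambda\Delta}=-\lambda$, so $h'(\lambda)=1+\lambda\Delta$. At $\lambda=ib$ this is $1+ib\Delta_c\neq0$, so the roots $\pm ib$ are simple.

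\textbf{Transversality.} Differentiate $h(\lambda(\Delta),\Delta)=0$ implicitly:
\[
\lambda'(1-b\Delta e^{-\lambda\Delta})=b\lambda e^{-\lambda\Delta}.
\]
Using $be^{-\lambda\Delta}=-\lambda$ gives
\[
\frac{d\lambda}{d\Delta}=\frac{-\lambda^2}{1+\lambda\Delta}.
\]
At $\lambda=ib$, $\Delta=\Delta_c$, this becomes $b^2/(1+ib\Delta_c)$, whose real part is
\[
\frac{b^2}{1+b^2\Delta_c^2}>0.
\]
So the pair crosses into the right half-plane with positive speed. The same computation at the later crossings $\Delta_n$ shows every crossing is left-to-right, which confirms that $\Delta_c$ is the first and genuine stability boundary.

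\textbf{Main obstacle.} The step needing most care is the rigorous claim that no root lies in the right half-plane for $0<\Delta<\Delta_c$. This rests on the continuity-of-roots argument for retarded quasi-polynomials, in particular that no roots come in from infinity on the right as $\Delta$ increases from $0$. I will cite the standard result in \citet{kuang1993delay}: for retarded equations the roots satisfy $\operatorname{Re}\lambda$ bounded above uniformly on compact $\Delta$ intervals. The Hopf crossing conditions (simple imaginary pair, no other roots on the imaginary axis, nonzero crossing speed) then hold for the nonlinear DDE, since its linearization at $x^*$ is exactly \eqref{eq:linear_delay}.
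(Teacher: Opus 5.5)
Your proposal is correct and takes the same route as the paper. You use the characteristic equation $\lambda+be^{-\lambda\Delta}=0$, locate the purely imaginary roots $\pm ib$ at $\Delta_c=\pi/(2b)$, and apply implicit differentiation to get $\operatorname{Re}(d\lambda/d\Delta)=b^2/(1+\pi^2/4)>0$. You also supply details the paper leaves implicit or states loosely: the root-continuity argument for stability when $b\Delta<\pi/2$ (which the paper takes from the classical Hayes result), the explicit simplicity check $h'(ib)=1+i\pi/2\neq 0$, and the sharper crossing set $\omega\Delta=\pi/2+2n\pi$ (the paper writes $\pi/2+n\pi$).
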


\begin{proof}
Seek solutions of the characteristic equation by substituting $u(t)=e^{\lambda t}$ into \eqref{eq:linear_delay}, obtaining
\[
\lambda=\beta e^{-\lambda\Delta}.
\]
At the first stability crossing, let $\lambda=i\omega$ with $\omega>0$. Since $\beta=-b$,
\[
i\omega=-b e^{-i\omega\Delta}=-b(\cos(\omega\Delta)-i\sin(\omega\Delta)).
\]
Equating real and imaginary parts gives
\[
0=-b\cos(\omega\Delta),\qquad \omega=b\sin(\omega\Delta).
\]
The first condition requires $\cos(\omega\Delta)=0$, so $\omega\Delta=\pi/2+n\pi$ for non-negative integer $n$. The first crossing corresponds to $n=0$, giving $\omega\Delta=\pi/2$ and hence $\omega=b$ from the second equation. Substituting $b=Cx^*(1-x^*)p'(x^*)$ yields \eqref{eq:deltac}.

It remains to verify the transversality condition and the absence of other imaginary-axis roots. Write the characteristic equation as $\lambda+be^{-\lambda\Delta}=0$. Differentiating with respect to $\Delta$:
\[
\frac{d\lambda}{d\Delta}=\frac{b\lambda e^{-\lambda\Delta}}{1-b\Delta e^{-\lambda\Delta}}.
\]
At $\lambda=ib$, $\Delta=\Delta_c=\pi/(2b)$, we have $e^{-ib\Delta_c}=e^{-i\pi/2}=-i$, so $be^{-\lambda\Delta}=-ib$. Substituting:
\[
\frac{d\lambda}{d\Delta}\bigg|_{\Delta=\Delta_c}=\frac{b^2}{1+i\pi/2},
\]
and therefore $\operatorname{Re}(d\lambda/d\Delta)=b^2/(1+\pi^2/4)>0$. The eigenvalue crosses the imaginary axis with positive speed. To verify no other purely imaginary roots exist at $\Delta_c$, note that if $\lambda=i\omega$ is a root, then taking moduli in $i\omega=-be^{-i\omega\Delta_c}$ gives $|\omega|=b$, so the only purely imaginary roots are $\lambda=\pm ib$. By the Hopf bifurcation theorem for delay differential equations \citep{kuang1993delay}, a periodic orbit bifurcates from the equilibrium at $\Delta=\Delta_c$. The direction and stability of this bifurcation are determined by the first Lyapunov coefficient, computed via center manifold reduction in Proposition~\ref{prop:supercritical}.
\end{proof}

\begin{proposition}[Supercritical Hopf bifurcation]
\label{prop:supercritical}
For the sigmoid response $p(x)=\sigma(k(x-x_c))$ and any admissible parameters such that the interior equilibrium $x^*\in(0,1)$ exists (equivalently, $p(0)<a/C<p(1)$; for steep sigmoids this reduces to $0<a<C$), the Hopf bifurcation at $\Delta=\Delta_c$ is supercritical: the first Lyapunov coefficient satisfies $\operatorname{Re}(c_1(0))<0$, the bifurcating periodic orbits are orbitally stable, and their amplitude grows continuously from zero as $\Delta$ increases through $\Delta_c$.
\end{proposition}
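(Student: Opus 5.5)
The plan is to reduce \eqref{eq:delayed_rep} near $x^*$ to a scalar DDE with polynomial nonlinearity. We then compute the first Lyapunov coefficient with the standard Hassard--Kazarinoff--Wan center-manifold formula for scalar delay equations, and finally show that its real part has a definite sign on the whole admissible parameter set.

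\emph{Expansion and normalization.} Write $x=x^*+u$, $v=u(t-\Delta)$, and $q:=a/C=p(x^*)\in(0,1)$. The logistic identity $p'=kp(1-p)$ gives closed forms for the derivatives of $p$ at $x^*$:
\[
p_1=kq(1-q),\qquad p_2=k^2q(1-q)(1-2q),\qquad p_3=k^3q(1-q)(1-6q+6q^2).
\]
Let $g(x)=x(1-x)$, with $g_0=x^*(1-x^*)$ and $g_1=1-2x^*$. Expanding to third order gives
\[
\dot u=-C\Bigl[g_0p_1v+g_1p_1uv+\tfrac12 g_0p_2v^2-p_1u^2v+\tfrac12 g_1p_2uv^2+\tfrac16 g_0p_3v^3\Bigr]+O(|(u,v)|^4).
\]
I would rescale time by $b=Cg_0p_1$, so that the linear part becomes $\dot u=-u(t-\Delta)$ with $\Delta_c=\pi/2$ and $\omega=1$ (Theorem~\ref{thm:critical_delay}). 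The nonlinearity then reads $\alpha_{uv}uv+\alpha_{vv}v^2+\alpha_{uuv}u^2v+\alpha_{uvv}uv^2+\alpha_{vvv}v^3$, with
\[
\alpha_{uv}=-\frac{g_1}{g_0},\quad \alpha_{vv}=-\frac{p_2}{2p_1},\quad \alpha_{uuv}=\frac{1}{g_0},\quad \alpha_{uvv}=-\frac{g_1p_2}{2g_0p_1},\quad \alpha_{vvv}=-\frac{p_3}{6p_1}.
\]

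\emph{Center-manifold computation.} Work in $C([-\pi/2,0])$ and take the critical eigenfunction $\phi(\theta)=e^{i\theta}$, so that $u\approx\phi(0)=1$ and $v\approx\phi(-\pi/2)=-i$. The adjoint eigenfunction is normalized with the bilinear form. The normalization constant is $\bar D=1/(1+i\pi/2)$, the same denominator that appeared in the transversality computation. Then:
\begin{itemize}[nosep]
\item Compute $g_{20},g_{11},g_{02}$ from the quadratic terms evaluated on $(1,-i)$.
\item Solve the two linear inhomogeneous boundary-value problems for the center-manifold coefficients $W_{20}(\theta)$ and $W_{11}(\theta)$. Each is explicit: a term in $e^{2i\theta}$ (respectively a constant) plus multiples of $\phi,\bar\phi$. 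The constants are obtained by dividing by the characteristic function, $1+e^{-i\pi}=2$ at $\lambda=0$ and $2i+e^{-i\pi}=2i-1$ at $\lambda=2i$.
\item Assemble $g_{21}$ from the cubic terms together with the quadratic terms evaluated against $W_{20},W_{11}$.
\item Form $c_1(0)=\frac{i}{2}\bigl(g_{20}g_{11}-2|g_{11}|^2-\frac13|g_{02}|^2\bigr)+\frac12 g_{21}$.
\end{itemize}
Combined with $\operatorname{Re}\lambda'(\Delta_c)>0$ from Theorem~\ref{thm:critical_delay}, the sign condition $\operatorname{Re}c_1(0)<0$ yields $\mu_2>0$ (the branch exists for $\Delta>\Delta_c$), $\beta_2<0$ (the orbits are orbitally stable), and amplitude of order $\sqrt{\Delta-\Delta_c}$. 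This gives all three conclusions of the proposition.

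\emph{Sign verification (main obstacle).} After the computation, $\operatorname{Re}c_1(0)$ is a quadratic-plus-linear expression in the coefficients above. It depends on three effective parameters: $s=g_1/g_0$, $\kappa=p_2/p_1=k(1-2q)$, and $p_3/p_1=k^2(1-6q+6q^2)=\kappa^2-2k^2q(1-q)$, plus the positive term $1/g_0$. The delicate point is that $s$ and $\kappa$ can each have either sign, so individual terms are not signed. My first attempt would be to group $\operatorname{Re}c_1$ as a quadratic form in $(s,\kappa)$. I would then show that this form is negative definite, or negative semidefinite with a strictly negative remainder. The strictly negative remainder should come from two sources: the Wright-type cubic $-\frac{1}{g_0}u^2v$, which is known to give supercriticality for Wright's equation, and the term $+\frac{1}{3}k^2q(1-q)$ inside $-p_3/(6p_1)$, which also enters with a favorable sign. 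If definiteness fails for general $(s,\kappa)$, I would exploit the constraint $s^2=(1-4g_0)/g_0^2$, i.e.\ $g_0s^2=1/g_0-4$. This ties $s^2$ to the positive $1/g_0$ term and should absorb the $s$-dependent contributions, leaving a sum of negative terms in $\kappa^2$ and $k^2q(1-q)$.
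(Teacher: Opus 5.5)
Your route is the paper's own: a Hassard--Kazarinoff--Wan reduction with $\varphi(\theta)=e^{i\omega\theta}$, $\bar D=1/(1+i\pi/2)$, the same formula for $c_1(0)$, and a sign argument via a definite quadratic form plus a positive remainder. Your rescaling of time by $b$ only multiplies $c_1$ by $b>0$.

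There is one concrete error, and if you carried it through it would sink the sign step. It is the divisor for the constant part of $W_{11}$. The characteristic function is $h(\lambda)=\lambda+e^{-\lambda\pi/2}$, so $h(0)=1$; note also that $1+e^{-i\pi}$ equals $0$, not $2$. Hence the constant in $W_{11}$ is the full $z\bar z$-coefficient $2\alpha_{vv}$ of the quadratic terms, not half of it. With your divisor $2$, the $\kappa^2$-coefficient of $5(4+\pi^2)\operatorname{Re}c_1(0)$ comes out as $+(4-\pi)$ instead of $-2(3\pi-2)$, and the form is no longer definite. For example, at $x^*=1/2$ (so $s=0$), $k=20$, $q=0.01$ you would find $\operatorname{Re}c_1(0)>0$. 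This point is admissible ($x_c\approx 0.73$), so you would get an apparent counterexample to the proposition.

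With $h(0)=1$ the computation gives exactly the structure you anticipated. In your rescaled time one finds
\[
5(4+\pi^2)\operatorname{Re}c_1(0)=-\Bigl[2(3\pi-2)(s^2+\kappa^2)+(7\pi-8)\,s\kappa+10\pi k^2q(1-q)+\frac{10\pi}{g_0}\Bigr].
\]
Use $|s\kappa|\le\tfrac12(s^2+\kappa^2)$ and $2(3\pi-2)-\tfrac12(7\pi-8)=\tfrac{5\pi}{2}$. Then the bracket is at least $\tfrac{5\pi}{2}(s^2+\kappa^2)+10\pi k^2q(1-q)+10\pi/g_0>0$. Definiteness never fails, so your fallback through $g_0s^2=1/g_0-4$ is unnecessary.

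The strictly negative remainder comes from exactly the two sources you named, with one sign correction. The cubic term is $+u^2v/g_0$, matching your own $\alpha_{uuv}=1/g_0$, not $-u^2v/g_0$. It is this positive coefficient that contributes $-2\pi/((4+\pi^2)g_0)$ to $\operatorname{Re}c_1(0)$.

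This bracket is the paper's $\mathcal{B}/\alpha_0^2$ rewritten, with $\rho=q$, $\alpha_0=g_0$, $\alpha_1=g_1$; multiplying by $b=Cg_0p_1$ recovers the paper's prefactor. The two write-ups differ only in the choice of variables:
\begin{itemize}
\item The paper uses $(\alpha_0k,\alpha_1)$. This leaves the $\rho$-dependent coefficient $Q(\rho)$ on $(\alpha_0k)^2$ and forces a discriminant check over $\mu=\rho(1-\rho)$.
\item Your $(s,\kappa)$, via $k^2=\kappa^2+4k^2q(1-q)$, give constant coefficients and move all $\rho$-dependence into a manifestly positive remainder.
\end{itemize}
Your choice makes for a slightly cleaner endgame.
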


The proof proceeds by center manifold reduction following the Hassard--Kazarinoff--Wan formalism \citep{hassard1981theory} applied to the infinite-dimensional DDE phase space \citep{hale1993introduction}. The key steps are: (i)~expansion of the nonlinearity to cubic order around $x^*$; (ii)~projection onto the center eigenspace via the Hale--Verduyn Lunel bilinear form; (iii)~computation of the center manifold corrections $W_{20}$, $W_{11}$; and (iv)~extraction of the first Lyapunov coefficient $c_1(0)$ from the resulting normal form. For the sigmoid response, the closed-form expression for $\operatorname{Re}(c_1(0))$ factors into a positive prefactor times $(\rho-1)\mathcal{B}$, where $\rho = p(x^*) = a/C \in (0,1)$ is the equilibrium repression probability and $\mathcal{B}$ is shown to be strictly positive for all $\rho\in(0,1)$ and $k>0$ by a positive-definiteness argument on a quadratic form. Since $\rho < 1$, it follows that $\operatorname{Re}(c_1(0))<0$ universally. The full derivation, including symbolic verification and numerical validation over $54{,}000$ parameter combinations, is given in Appendix~\ref{app:hopf_proof}.

For the sigmoid response function, the derivative at equilibrium takes the form
\begin{equation}
    p'(x^*) = k\,\frac{a}{C}\left(1-\frac{a}{C}\right),
\end{equation}
Substituting into \eqref{eq:deltac} yields the explicit critical delay
\begin{equation}
    \Delta_c =
    \frac{\pi}
    {2k\,a\,x^*(1-x^*)\left(1-\frac{a}{C}\right)}.
\label{eq:deltac_sigmoid}
\end{equation}
This expression reveals that the critical delay decreases with increasing sharpness $k$, with increasing autonomy advantage $a$ (provided $a<C$), and is minimized near mixed-population states where $x^*(1-x^*)$ is large.

\subsection{Summary and testable hypotheses}

Prior work established the existence of Hopf bifurcations in delayed replicator equations with symmetric, linear payoffs \citep{wesson2016hopf, wesson2016hopfRPS}. \citet{benkhalifa2018distributed} showed that the shape of the delay distribution (Dirac, uniform, Gamma) affects the Hopf threshold, and \citet{wettergren2023replicator} demonstrated that asymmetric delays on costs versus benefits produce alternating stability--instability windows absent in the single-delay case. In feedback-evolving games, \citet{yan2021cooperator} found that delayed environmental feedback from cooperation, combined with punishment of defectors, generates oscillations through the same Hopf mechanism, a parallel finding where the delay acts on agent contributions rather than institutional observation. More recently, \citet{gao2025bifurcation} combined logistic population growth with strategy-dependent delays, producing richer bifurcation structure than the constant-population case. In multi-agent reinforcement learning, \citet{chen2020delay} formalized delay-aware Markov games and showed that delay from one agent propagates to others' learning, while \citet{derman2021acting} proved that deterministic non-stationary Markov policies suffice under execution delay without state augmentation, which provides theoretical backing for why Q-learning handles delayed environments.

The present analysis extends this body of work in two directions. First, we replace the symmetric payoff structure with an asymmetric institutional response: a nonlinear sigmoid function $p(x)$ that models threshold-based repression (Section~2.2). This changes both the equilibrium condition ($p(x^*)=a/C$ rather than a symmetric Nash condition) and the form of the critical delay (Equation~\ref{eq:deltac_sigmoid}), which now depends on the sharpness parameter $k$ controlling institutional decisiveness. Second, we prove that the Hopf bifurcation is supercritical \emph{universally} across the sigmoid parameter class (Proposition~\ref{prop:supercritical}), not just for isolated parameter values. Prior results left the bifurcation direction as a numerical observation; we provide a closed-form proof via center manifold reduction, validated symbolically and numerically over $54{,}000$ parameter combinations (Appendix~\ref{app:hopf_proof}). Together, these results yield a complete analytical characterization: for any sigmoid response, the delayed system transitions from a stable equilibrium to bounded limit cycles at a critical delay $\Delta_c$ that decreases with sharpness $k$ and is maximally fragile near the institutional threshold.

The analytical results generate three testable hypotheses for the networked agent-based simulation (Section~3). We state them here to motivate the experimental design in Section~4.

\begin{enumerate}
\item \textbf{Delay destabilizes.} Increasing the institutional delay $\Delta$ should move the system from stable equilibrium to oscillatory dynamics and, beyond the local bifurcation, potentially to runaway-crackdown regimes. This follows directly from Theorem~\ref{thm:critical_delay}: the stability margin shrinks as $\Delta$ grows.
\item \textbf{Sharpness amplifies.} Increasing the response sharpness $k$ should reduce the stability margin by lowering $\Delta_c$ (Equation~\ref{eq:deltac_sigmoid}), so that at fixed delay the system crosses from stable to unstable as $k$ increases.
\item \textbf{Mixed populations are most fragile.} Fragility should be greatest where the product $x^*(1-x^*)p'(x^*)$ is large. For a sigmoid response, this maximum occurs near the response midpoint $x_c$ when the population is neither overwhelmingly conformist nor overwhelmingly autonomous.
\end{enumerate}

These hypotheses concern local stability. Theorem~\ref{thm:critical_delay} proves that the equilibrium loses stability via a Hopf bifurcation, and Proposition~\ref{prop:supercritical} establishes that this bifurcation is supercritical: the resulting periodic orbit is orbitally stable with amplitude growing continuously from zero. The theory does not, by itself, characterize global dynamics such as large-amplitude runaway or crackdown. The transition from continuous ODE to discrete networked simulation introduces three additional complexity layers: (i)~finite populations with stochastic action selection, (ii)~heterogeneous network structure, and (iii)~adaptive agents that learn from experience. The experiments in Section~4 test whether the delay-destabilization mechanism survives these extensions, and which agent architecture is most vulnerable.

%% file: sections/03_network_model.tex
\section{Networked Simulation Model}

The mean-field ODE (Section~2) proves that delay can destabilize an equilibrium, but it assumes a well-mixed population of identical agents with a fixed strategy-revision rule. Real multi-agent systems violate all three assumptions: agents are heterogeneous, interact on structured networks, and adapt their behavior through learning. This section constructs a simulation that relaxes these assumptions. The goal is not to replicate the ODE dynamics quantitatively (the integer delay steps in the simulation do not map onto the continuous time units of the ODE) but to test whether the three hypotheses from Section~2.5 survive in a substantially richer system.

\subsection{Agents and actions}

Each agent $i\in\{1,\ldots,N\}$ chooses an action at each discrete time step $t$:
\[
    a_i(t)\in\{L,M,R\},
\]
corresponding to loyal, moderate, and radical behavior. These actions are mapped to a numerical activity level $f(L)=0$, $f(M)=1$, $f(R)=2$. Loyal agents conform fully to institutional expectations, moderate agents exercise limited autonomy, and radical agents pursue maximum autonomy at the risk of attracting punishment. Each agent is characterized by a learning rate $\alpha_i$, an exploration parameter $\epsilon_i$, and a detectability score $d_i$ that modulates how visible the agent's behavior is to the institutional observer.

\subsection{Network topologies}

Agents interact on a directed graph $G=(V,E)$ where edges represent influence relationships. The simulation supports three topology classes. Erd\H{o}s--R\'enyi random graphs provide a homogeneous-mixing baseline with approximately Poisson degree distribution \citep{szabo2007evolutionary}. Scale-free networks generated via preferential attachment produce heavy-tailed degree distributions in which hub nodes exert disproportionate influence \citep{santos2005scalefree}. Stochastic block model graphs create modular community structure with dense intra-community connectivity and sparse inter-community links \citep{holland1983stochastic, girvan2002community}.

In the modular topology, bridge nodes are identified using betweenness centrality \citep{freeman1977betweenness}, which measures the fraction of shortest paths between all node pairs that pass through a given node. Nodes with high betweenness centrality serve as conduits for information and influence between communities. This definition ensures that bridge status reflects actual inter-community connectivity rather than merely high intra-community degree.

\subsection{Aggregate alarm and delayed repression}

The institutional observer computes an aggregate alarm signal by averaging weighted activity levels across the population:
\begin{equation}
    A(t)=\frac{1}{N}\sum_{i=1}^{N} d_i\, f(a_i(t)).
\end{equation}
The detectability weights $d_i$ allow heterogeneous visibility: agents in exposed positions or with high-profile behavior contribute more to the institutional alarm than those operating in obscurity.

The repression probability for agent $i$ at time $t$ depends on the delayed alarm:
\begin{equation}
    p_i(t)=u_t\,\sigma\bigl(k(A(t-\Delta)-A_c)\bigr)\,d_i,
\end{equation}
where $\sigma(\cdot)$ is the logistic sigmoid from Section~2, $\Delta$ is the institutional delay measured in discrete time steps, $k$ is the response sharpness, $A_c$ is the alarm threshold, and $u_t \in [0, 2.5]$ is the regulator force (the institutional control intensity). The regulator force $u_t$ may be fixed at a constant value, adjusted by a heuristic rule, or selected by a learning regulator agent. The product $\sigma(\cdot)\,d_i$ ensures that more detectable agents face higher punishment probability when the system is under repression.

\subsection{Local influence and rewards}

Each agent observes its local neighborhood through an influence signal:
\begin{equation}
    I_i(t)=\sum_{j\in\mathcal{N}^{-}(i)} w_{ji}\, f(a_j(t)),
\end{equation}
where $\mathcal{N}^{-}(i)$ denotes the set of agents with directed edges into $i$ and $w_{ji}$ are influence weights. The agent's immediate reward combines a benefit from its action, a social influence component, and a punishment cost:
\begin{equation}
    r_i(t)=B(a_i(t),\chi_i)+\lambda\, I_i(t)\,f(a_i(t))-\xi_i(t)\,C(a_i(t),\kappa_i),
\end{equation}
where $\xi_i(t)\sim\mathrm{Bernoulli}(p_i(t))$ is a stochastic punishment indicator (the agent either receives the full cost or nothing at each step), $\chi_i$ is a charisma parameter scaling the benefit of radical behavior, $\lambda$ is the influence coupling strength, and $\kappa_i$ scales the punishment cost. The benefit function is $B(a,\chi) = \chi \cdot f(a)$ (linear in activity level, so radical actions yield the highest benefit), and the cost function is $C(a,\kappa) = \kappa \cdot f(a)$ (punishment is proportional to activity level). This incentive gradient toward radicalism is counterbalanced by the repression mechanism.

\subsection{Learning dynamics}

In learning variants, each agent employs tabular Q-learning \citep{sutton2018reinforcement}, a standard reinforcement learning algorithm in which the agent maintains a table of estimated long-run values $Q(s,a)$ for each state--action pair and updates them toward observed rewards. The agent uses a compact state representation. The observation state for agent $i$ at time $t$ is a tuple comprising a discretized local influence bucket, a discretized delayed alarm bucket, a binary recent-punishment indicator, and a binary bridge-membership flag. This yields a state space of manageable size ($3 \times 3 \times 2 \times 2 = 36$ states) that agents can explore within the simulation horizon. The Q-update rule is standard:
\[
Q_i(s,a) \leftarrow Q_i(s,a) + \alpha_i\bigl[r_i(t) + \gamma \max_{a'} Q_i(s',a') - Q_i(s,a)\bigr],
\]
where $s$ is the current observation state, $s'$ is the next observation state (computed from the next time step's delayed alarm and local influence), and $\gamma$ is the discount factor.

Action selection follows an $\epsilon$-greedy policy with exploration rate $\epsilon_i = 0.08$ (a standard choice in tabular RL ensuring sufficient exploration without excessive noise; \citealp{sutton2018reinforcement}, Chapter~2). The learning rate is $\alpha_i = 0.10$ and the discount factor $\gamma = 0.95$, both standard defaults for tabular Q-learning in environments with moderate horizon length \citep{sutton2018reinforcement}. We verified robustness with a $3\times3$ sweep over $\alpha \in \{0.05, 0.1, 0.2\}$ and $\gamma \in \{0.9, 0.95, 0.99\}$: across all nine combinations the Q-learning runaway rate is ${\approx}15\%$ at $\Delta=0$, $35\%$ at $\Delta=8$, and $65$--$75\%$ at $\Delta=20$, so both the monotonic delay--runaway relationship and the architecture hierarchy are preserved (results under \texttt{experiments/results/paper1\_v2/alpha\_gamma\_robust}). The learning model is deliberately minimal: it provides agents with the capacity to detect and exploit temporal patterns in repression without pretending to model human cognition.

In fixed-policy variants, agents do not learn. Instead, they sample actions from a stationary probability distribution over $\{L,M,R\}$ that does not depend on the environment state. This provides a baseline against which to measure the destabilizing potential of adaptive behavior under delayed feedback.

\subsection{Simulation loop}

Algorithm~\ref{alg:sim} summarizes the complete simulation loop, showing how the components defined above connect into a training procedure. Each agent's objective is to maximize cumulative discounted reward $\sum_{t=0}^{T-1} \gamma^t r_i(t)$; there is no explicit loss function to minimize, because the Q-learning update rule (above) approximates the optimal policy through temporal-difference bootstrapping rather than gradient descent.

\begin{figure}[ht]
\centering
\begin{minipage}{0.92\linewidth}
\begin{algorithmic}[1]
\small
\State \textbf{Input:} graph $G$, delay $\Delta$, sharpness $k$, horizon $T$, seeds
\State Initialize Q-tables $Q_i(s,a) \leftarrow 0$ for all agents $i$, states $s$, actions $a$
\State Initialize alarm history buffer $H \leftarrow [0, \ldots, 0]$ of length $\Delta + 1$
\For{$t = 0, 1, \ldots, T-1$}
    \State Compute delayed alarm: $A_{\mathrm{delayed}} \leftarrow H[t - \Delta]$ \hfill\emph{(stale observation)}
    \For{each agent $i$}
        \State Observe state $s_i \leftarrow (\text{influence bucket}, \text{alarm bucket}, \text{punished?}, \text{bridge?})$
        \State Select action $a_i \leftarrow \epsilon\text{-greedy}(Q_i, s_i, \epsilon_i)$
    \EndFor
    \State Compute current alarm: $A(t) \leftarrow \frac{1}{N}\sum_i d_i \cdot f(a_i)$; append to $H$
    \State Compute repression: $p_i(t) \leftarrow u_t \cdot \sigma(k(A_{\mathrm{delayed}} - A_c)) \cdot d_i$
    \State Sample punishment: $\xi_i(t) \sim \mathrm{Bernoulli}(p_i(t))$
    \State Compute rewards: $r_i(t) \leftarrow B(a_i, \chi_i) + \lambda\, I_i(t) \cdot f(a_i) - \xi_i(t) \cdot C(a_i, \kappa_i)$
    \For{each agent $i$ (learning variants only)}
        \State Observe next state $s'_i$ from step $t{+}1$ observations
        \State Update: $Q_i(s_i, a_i) \leftarrow Q_i(s_i, a_i) + \alpha_i[r_i(t) + \gamma \max_{a'} Q_i(s'_i, a') - Q_i(s_i, a_i)]$
    \EndFor
\EndFor
\State \textbf{Output:} time-series history $\{A(t), \Rfrac(t), \text{regime label}\}$
\end{algorithmic}
\end{minipage}
\captionof{algorithm}{Simulation loop for one run. The key feature is that repression at step $t$ is computed from the alarm at step $t{-}\Delta$ (line~5), creating the delayed feedback loop that drives instability.}
\label{alg:sim}
\end{figure}

%% file: sections/04_experiments.tex
\section{Experiments}

\begin{figure}[t]
\centering
\begin{tikzpicture}[
    stage/.style={draw, rounded corners, fill=blue!8, minimum width=3.0cm, minimum height=0.9cm, align=center, font=\small},
    hyp/.style={draw, rounded corners, fill=orange!10, minimum width=2.4cm, minimum height=0.6cm, align=center, font=\footnotesize},
    result/.style={draw, rounded corners, fill=green!10, minimum width=2.4cm, minimum height=0.6cm, align=center, font=\footnotesize},
    arr/.style={-{Stealth[length=5pt]}, thick},
    note/.style={font=\scriptsize\itshape, text=gray!70!black},
]

\node[stage] (ode) {Stage 1:\\Delayed Replicator ODE};
\node[stage, right=2.2cm of ode] (sim) {Stage 2:\\Networked Simulation};
\node[stage, right=2.2cm of sim] (class) {Regime\\Classification};

\node[hyp, below=0.7cm of ode] (h1) {H1: Delay destabilizes};
\node[hyp, below=0.15cm of h1] (h2) {H2: Sharpness amplifies};
\node[hyp, below=0.15cm of h2] (h3) {H3: Mixed pop.\ fragile};

\node[result, below=0.7cm of sim] (e0) {Exp.~1: ODE valid.};
\node[result, below=0.15cm of e0] (e1) {Exp.~2: Discrete bridge};
\node[result, below=0.15cm of e1] (e23) {Exp.~3--4: Sweeps};
\node[result, below=0.15cm of e23] (e4) {Exp.~5: Crossed design};
\node[result, below=0.15cm of e4] (e5) {Exp.~6: RL regulator};

\node[result, below=0.7cm of class] (rq1) {Q1: Survives? \checkmark};
\node[result, below=0.15cm of rq1] (rq2) {Q2: Sharpness? \checkmark};
\node[result, below=0.15cm of rq2] (rq3) {Q3: Learning buffers};
\node[result, below=0.15cm of rq3] (rq4) {Q4: RL regulator?};

\draw[arr] (ode) -- node[above, note] {hypotheses} (sim);
\draw[arr] (sim) -- node[above, note] {50 seeds/cond.} (class);

\draw[arr, gray] (ode) -- (h1);
\draw[arr, gray] (sim) -- (e0);
\draw[arr, gray] (class) -- (rq1);

\node[above=0.15cm of ode, note] {Section 2};
\node[above=0.15cm of sim, note] {Sections 3--4};
\node[above=0.15cm of class, note] {Section 4};

\end{tikzpicture}
\caption{Solution pipeline. Stage~1 derives analytical predictions from the delayed replicator ODE. Stage~2 tests these predictions in a networked agent-based simulation with three decision architectures. Regime classification assigns each run to stable, oscillatory, or runaway. The mapping from hypotheses (H1--H3) through experiments (1--6) to research questions (Q1--Q4) is detailed in Table~\ref{tab:experiment_map}.}
\label{fig:pipeline}
\end{figure}

The analytical theory (Section~2) establishes that delay can destabilize a well-mixed population with a fixed strategy-revision rule. The simulation model (Section~3) introduces three layers of realism: finite heterogeneous populations, network structure, and adaptive learning via reinforcement learning. The experiments address the following research questions:

\begin{enumerate}
\item Does the delay-destabilization mechanism survive in the networked simulation? The ODE predicts instability above $\Delta_c$; the simulation adds noise, discretization, and network effects that could either amplify or suppress this mechanism.
\item How does sigmoid sharpness $k$ interact with delay? The theory predicts that higher $k$ lowers $\Delta_c$ (Equation~\ref{eq:deltac_sigmoid}). We test whether the simulation reproduces this monotonic relationship.
\item Does adaptive learning amplify or buffer delay-induced instability? The naive expectation is that learning agents exploit delayed low-alarm windows more effectively, amplifying instability. The alternative is that cumulative learning provides implicit memory that dampens oscillations.
\item Can an adaptive regulator compensate for the destabilizing effect of adaptive agents? If agents learn to exploit delay, can an RL-trained regulator learn to counteract them?
\end{enumerate}

Table~\ref{tab:experiment_map} maps each experiment to the research question it addresses and the key controlled variables.

\begin{table}[ht]
\centering
\begin{tabular}{llll}
\toprule
Experiment & Research question & Varied factor & Controlled factors \\
\midrule
Experiment~1: ODE validation & Baseline & $\Delta/\Delta_c$ & Continuous ODE \\
Experiment~2: Discrete mean-field & Q1 (discretization) & $\eta$, $\Delta$ & Mean-field, no network \\
Experiment~3: Delay sweep & Q1 & $\Delta$ & $k{=}20$, Q-learning \\
Experiment~4: Sharpness sweep & Q2 & $k$ & $\Delta{=}15$, Q-learning \\
Experiment~5: Crossed design & Q3 & $\Delta \times$ architecture & $k{=}10$, 3 architectures \\
Experiment~6: RL regulator & Q4 & Regulator type & $\Delta{=}6$, Q-learning \\
\bottomrule
\end{tabular}
\caption{Mapping of experiments to research questions. Each experiment varies one factor while controlling others to isolate the effect. Experiment~1 and Experiment~2 validate the analytical theory; Experiment~3 through Experiment~6 test the hypotheses in the full simulation.}
\label{tab:experiment_map}
\end{table}

The progression is deliberate: Experiment~1 confirms the ODE theory is mathematically correct; Experiment~2 bridges continuous and discrete dynamics; Experiment~3 and Experiment~4 establish dose-response curves for delay and sharpness; Experiment~5 is the central experiment testing the interaction between delay and agent architecture; Experiment~6 is exploratory. All results reported in this section use 50 independent seeds per condition. Raw time-series histories, summary statistics, and configuration manifests are stored under \texttt{experiments/results/paper1\_v2/} with per-experiment subdirectories. The progression from ODE validation through discrete mean-field to full network simulation reveals both the robustness of the core delay-instability mechanism and the quantitative gaps introduced by agent heterogeneity and adaptive learning.

\subsection{Experimental setup}

\paragraph{Shared settings.} All networked experiments use $N=240$ agents on a modular stochastic block model graph (6 communities of 40 nodes, intra-community edge probability $p_{\mathrm{in}}=0.08$, inter-community $p_{\mathrm{out}}=0.004$). The population size $N{=}240$ is large enough for stable mean-field-like alarm statistics but small enough for complete Q-table exploration within 500 steps. The 6-community modular structure follows standard stochastic block model designs \citep{holland1983stochastic} and produces well-defined bridge nodes for the companion paper's analysis; the intra/inter edge probabilities ($0.08/0.004$) yield a modularity ratio of $p_{\mathrm{in}}/p_{\mathrm{out}}=20$, ensuring clear community separation. The simulation horizon is 500 time steps, sufficient for Q-learning agents to reach stable policies (convergence diagnostic below confirms Q-values equilibrate by step 100--200). We use 50 independent seeds per condition throughout, with results frozen after completion to ensure reproducibility. The sharpness parameter $k$ is set high ($k=10$ or $k=20$) to ensure the system operates above the critical threshold $\Delta_c$ at the tested delay values; lower $k$ would require longer delays to produce measurable effects. The alarm threshold $A_c$ is set to the theoretical equilibrium level. These parameter choices follow from the analytical predictions: we choose settings where the theory predicts instability and test whether the simulation confirms it.

\paragraph{Baselines and comparison strategy.} This paper studies a mechanism (delay-induced instability), not a method that competes against alternatives on a benchmark. The natural baselines are therefore \emph{ablation baselines}: conditions that remove specific components of the mechanism to isolate their contribution. Fixed-policy agents serve as the null model (no feedback loop, so delay cannot destabilize). Reactive agents isolate the effect of memoryless reactivity. Q-learning agents add adaptive memory. The comparison across these three architectures answers Q3. No prior work has studied the specific interaction between institutional delay, agent learning, and network structure that we examine; accordingly, we compare agent architectures rather than competing models from the literature. The ODE validation (Experiment~1) and discrete mean-field (Experiment~2) serve as additional baselines connecting the simulation to the analytical theory.

\paragraph{Regime classification.} Each simulation run is classified into one of three regimes based on tail-half statistics (the final 50\% of the time series, after transient dynamics decay). \emph{Stable}: the radical fraction remains below the runaway threshold $R_{\max}=0.40$ and oscillation amplitude is small. \emph{Oscillatory}: substantial periodic variation in radical fraction (detected by spectral analysis: peak-to-total power ratio $>0.25$ with amplitude $>0.08$, or a fallback test of tail amplitude $>0.12$ or tail standard deviation $>0.04$) without exceeding $R_{\max}$. \emph{Runaway}: the maximum radical fraction exceeds $R_{\max}=0.40$. We emphasize that ``runaway'' denotes crossing this threshold, not unbounded growth: across all runs so labeled, the tail-period mean radical fraction never exceeds $0.30$, consistent with the supercritical-Hopf prediction of bounded limit cycles rather than literal radical domination. The threshold $R_{\max}=0.40$ is chosen as a substantively meaningful level at which radical agents dominate; we verify that the central ordering (fixed $\leq$ Q-learning $\leq$ reactive in delay sensitivity) is preserved across thresholds from 0.30 to 0.50 (Section~5).

\subsection{Theory validation (Experiment~1, Experiment~2)}

\paragraph{Experiment~1: ODE validation.} This experiment validates the analytical theory by numerically integrating the delayed replicator equation~\eqref{eq:delayed_rep} directly, sweeping $\Delta$ from $0$ to $4\Delta_c$. It is the ground-truth baseline: if the ODE does not bifurcate at $\Delta_c$, the theory is wrong and the simulation experiments are moot. The integration uses parameter values $a{=}2.0$, $C{=}7.0$, $k{=}10$, $x_c{=}0.72$, representative magnitudes chosen so that the interior equilibrium $x^*{\approx}0.63$ sits at a non-degenerate point away from the boundaries, with critical delay $\Delta_c{\approx}0.47$. The supercriticality proof (Appendix~\ref{app:hopf_proof}) shows the qualitative bifurcation is insensitive to these specific values across all admissible parameters.

Direct numerical integration of Equation~\eqref{eq:delayed_rep} confirms Theorem~\ref{thm:critical_delay}. Figure~\ref{fig:ode_bifurcation} shows the bifurcation diagram: below $\Delta_c$, trajectories converge to $x^*$; at $\Delta/\Delta_c=1$, stable limit cycles emerge with amplitude growing continuously from zero, consistent with the supercritical Hopf bifurcation. Figure~\ref{fig:ode_delay_sweep} shows representative trajectories, and Figure~\ref{fig:ode_k_sweep} shows that increasing sharpness $k$ at fixed $\Delta/\Delta_c = 1.5$ sharpens the nonlinear response and increases limit-cycle amplitude.

\begin{figure}[ht]
\centering
\includegraphics[width=0.85\textwidth]{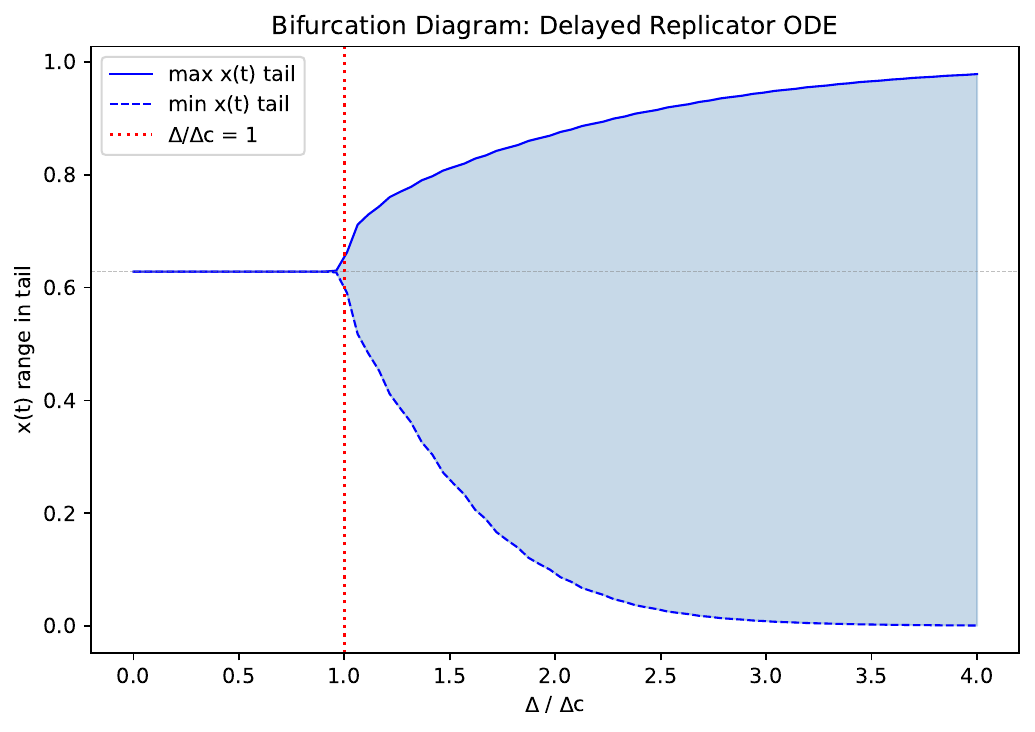}
\caption{Bifurcation diagram for the delayed replicator ODE (Experiment~1). Horizontal axis: normalized delay $\Delta/\Delta_c$. Below $\Delta_c$, the equilibrium is stable. At $\Delta/\Delta_c=1$, a Hopf bifurcation produces stable limit cycles with continuously growing amplitude (solid: numerical envelope; dashed: theoretical threshold). \textbf{Conclusion:} the analytical theory (Theorem~\ref{thm:critical_delay}) is confirmed exactly---the instability occurs precisely at the predicted threshold.}
\label{fig:ode_bifurcation}
\end{figure}

\begin{figure}[ht]
\centering
\includegraphics[width=0.85\textwidth]{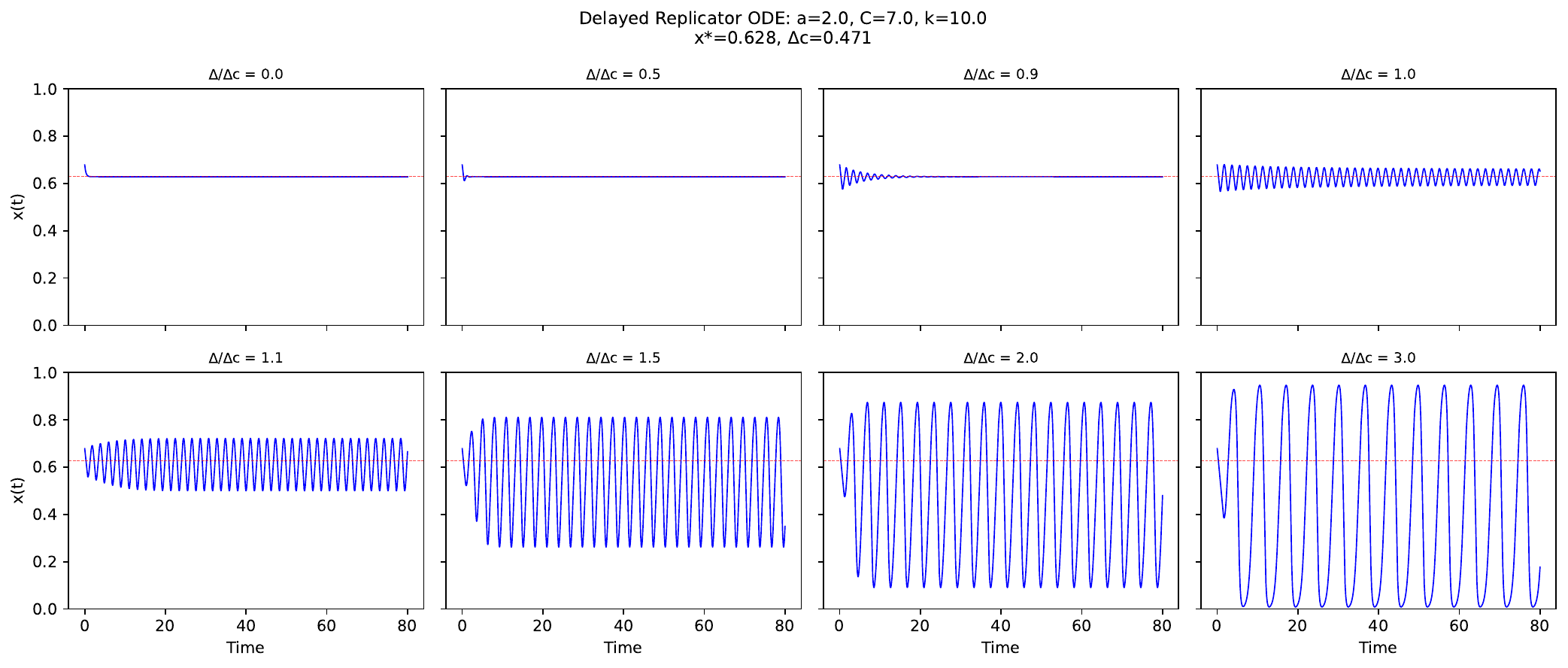}
\caption{Representative ODE trajectories at six delay levels (Experiment~1). Below $\Delta_c$: damped convergence to equilibrium. Above: sustained limit cycles with amplitude growing with $\Delta$. The transition is sharp, consistent with the supercritical Hopf bifurcation (Proposition~\ref{prop:supercritical}).}
\label{fig:ode_delay_sweep}
\end{figure}

\begin{figure}[ht]
\centering
\includegraphics[width=0.85\textwidth]{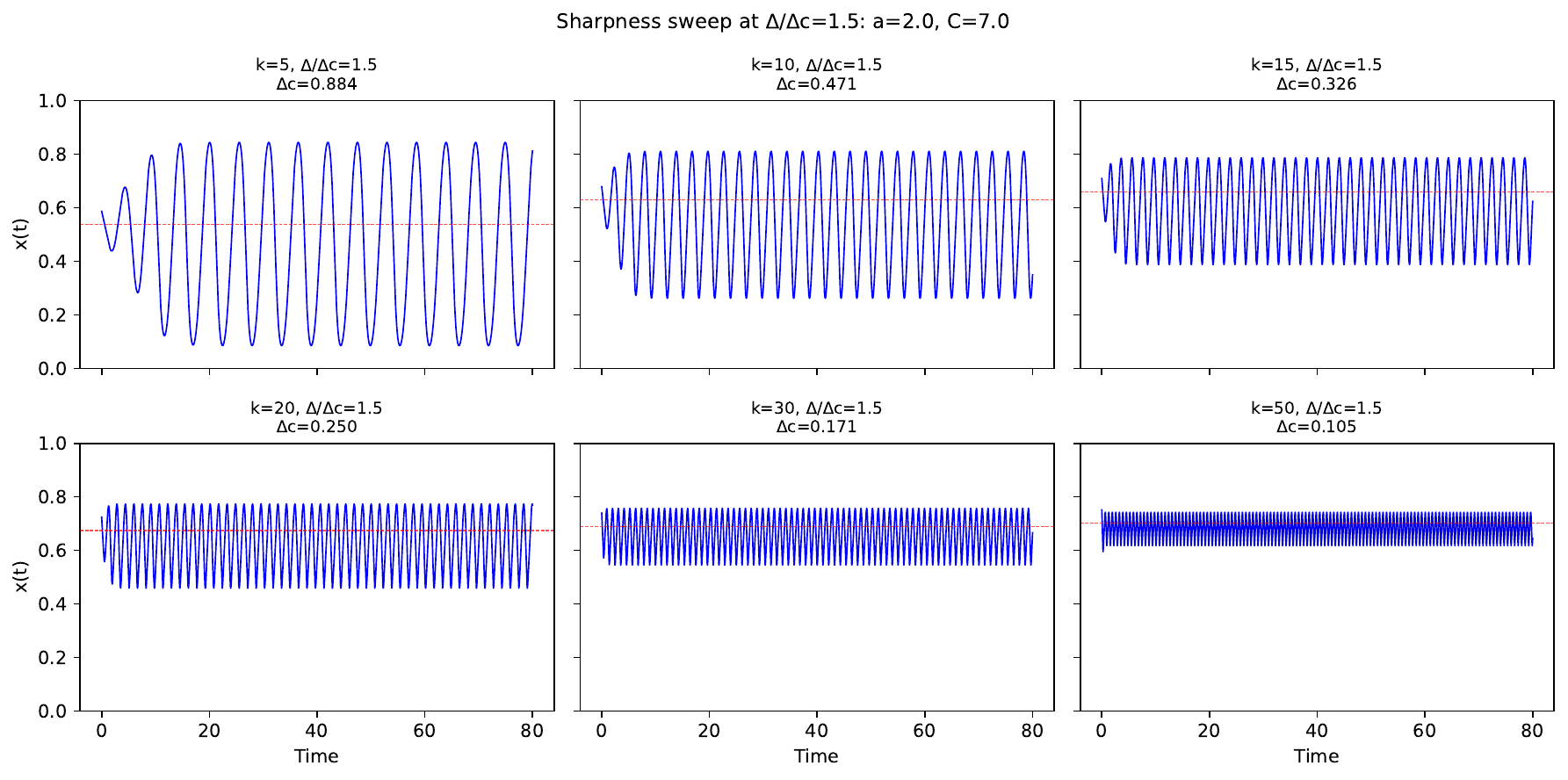}
\caption{Effect of sharpness $k$ on ODE dynamics at fixed $\Delta/\Delta_c = 1.5$ (Experiment~1). All panels are at the same distance beyond the stability boundary, yet higher $k$ produces sharper, larger-amplitude limit cycles saturating near the population boundaries. \textbf{Conclusion:} sharpness amplifies the nonlinear consequences of delay-induced instability, confirming Hypothesis~2.}
\label{fig:ode_k_sweep}
\end{figure}

\paragraph{Experiment~2: Discrete mean-field.} This experiment bridges continuous ODE dynamics and discrete multi-agent simulation. Discretization itself can introduce instabilities absent in the continuous limit \citep{alboszta2004stability}. We study a discrete-time mean-field system with step size $\eta$ and verify that the continuous bifurcation structure is recovered as $\eta\to 0$. This isolates discretization effects from network structure, agent heterogeneity, and learning.

Figure~\ref{fig:discrete_mf_phase} presents the phase diagram in $(\Delta, \eta)$ space. For small $\eta$, the discrete system recovers the ODE bifurcation structure. As $\eta$ increases, the stability margin narrows and the dynamics overshoot the ODE limit-cycle amplitude (Figure~\ref{fig:discrete_mf_trajectories}). Figure~\ref{fig:discrete_mf_eta} quantifies this: the stability margin decreases monotonically with $\eta$. This establishes that discretization introduces additional instability beyond delay. The result provides a baseline for interpreting the full simulation, where discrete time steps are inherent.

\begin{figure}[ht]
\centering
\includegraphics[width=0.75\textwidth]{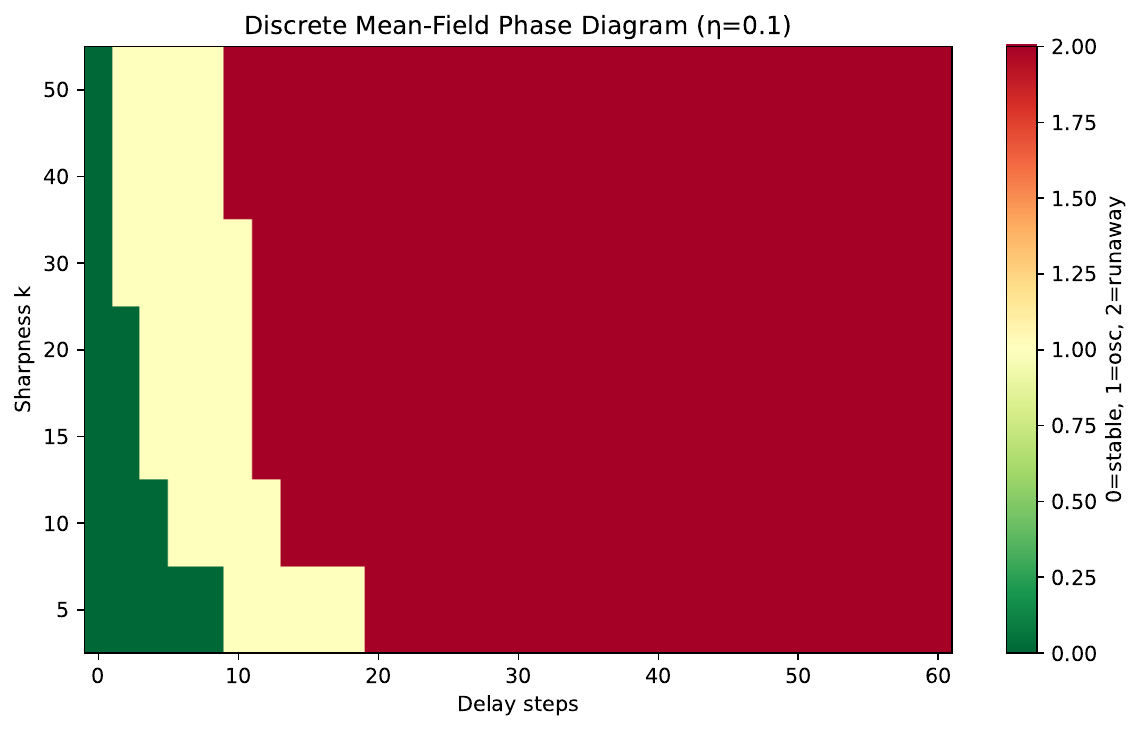}
\caption{Phase diagram of the discrete mean-field system in $(\Delta, \eta)$ space (Experiment~2). Color: regime classification. \textbf{Conclusion:} the stability region contracts with increasing $\eta$, confirming that discretization itself is an additional instability source---the full simulation's discrete time steps make the system \emph{more} fragile than the ODE predicts.}
\label{fig:discrete_mf_phase}
\end{figure}

\begin{figure}[ht]
\centering
\includegraphics[width=0.85\textwidth]{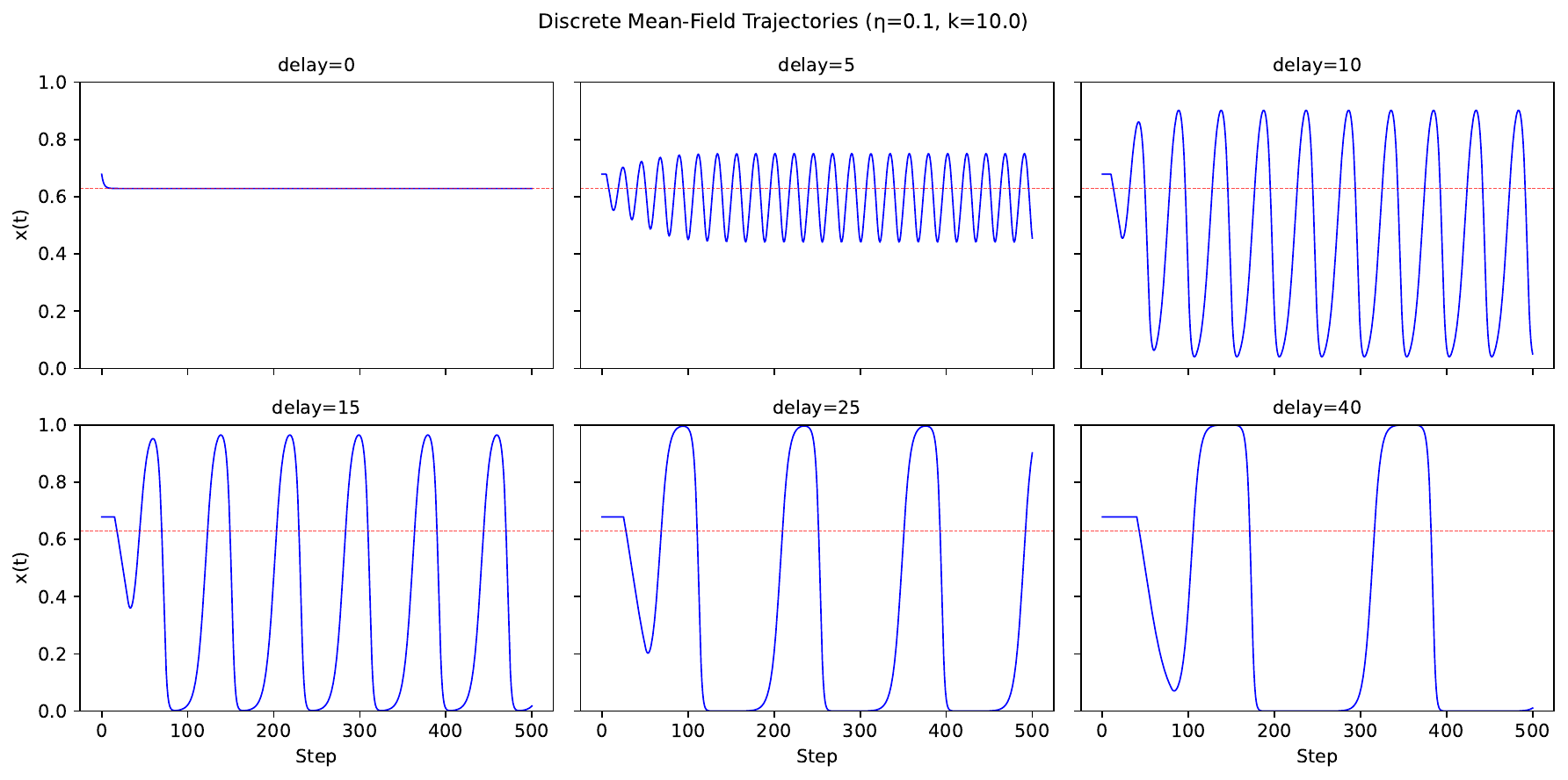}
\caption{Discrete mean-field trajectories at $\Delta{=}1.2\Delta_c$ for several $\eta$ values (Experiment~2). Small $\eta$: bounded oscillations matching the ODE limit cycle. Large $\eta$: overshooting dynamics exceeding the ODE envelope. \textbf{Conclusion:} the full simulation's discrete time steps ($\eta \approx 1$) introduce additional instability beyond the continuous theory.}
\label{fig:discrete_mf_trajectories}
\end{figure}

\begin{figure}[ht]
\centering
\includegraphics[width=0.6\textwidth]{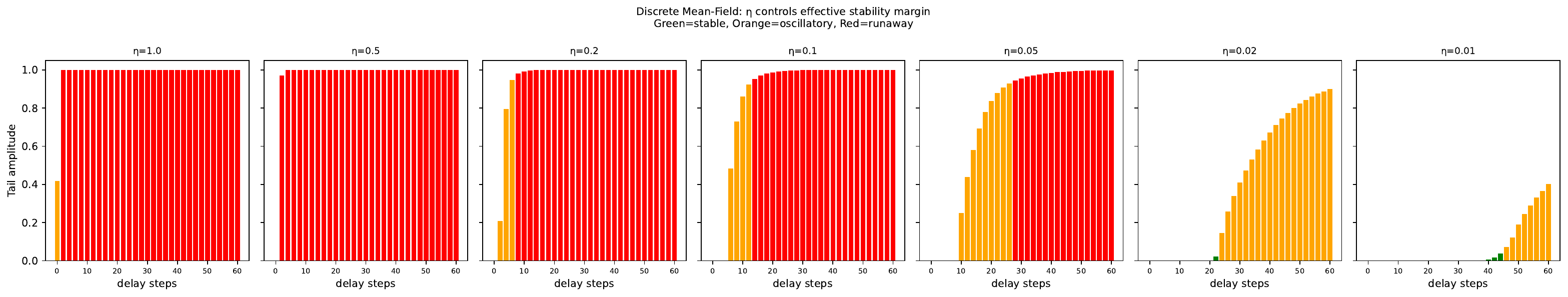}
\caption{Stability margin vs.\ discrete step size $\eta$ (Experiment~2). As $\eta\to 0$, the margin recovers $\Delta_c$ from the continuous theory. \textbf{Conclusion:} discretization reduces the stability margin monotonically. This provides a quantitative bridge between the ODE prediction and the simulation's inherent discrete dynamics.}
\label{fig:discrete_mf_eta}
\end{figure}

\subsection{Dose-response (Experiment~3, Experiment~4)}

\paragraph{Experiment~3: Delay sweep.} This experiment tests Hypothesis~1 (delay destabilizes) in the full networked simulation with Q-learning agents. We sweep repression delay from 0 to 30 steps using $k{=}20$ (placing the system well above $\Delta_c$) and 500 time steps (sufficient for Q-value convergence). If the ODE mechanism survives, runaway frequency should increase monotonically with delay.

Table~\ref{tab:delay_sweep} reports runaway rate vs.\ delay ($k{=}20$, Q-learning agents). At delay${}{=}0$ the system is already only 46\% stable (44\% oscillatory, 10\% runaway): the non-delayed dynamics sit near the oscillatory boundary, and the 10\% runaway reflects stochastic noise rather than delay (Theorem~\ref{thm:nodelay}). Increasing delay shifts this distribution decisively into runaway, which rises to 72\% at delay${}{=}30$ (95\% confidence interval (CI) $[58\%, 83\%]$), an excess of $+62$ percentage points over the delay-zero runaway floor. The monotonic trend confirms the qualitative prediction: delay converts the marginally-stable baseline into large-amplitude runaway. Sixty-two percentage points is not a subtle effect. This sweep uses Q-learning agents, whose adaptive exploration leaves them near the oscillatory boundary even at $\Delta=0$; the cleanest demonstration of destabilization from a \emph{fully} stable baseline is the reactive architecture (Experiment~5 and Figure~\ref{fig:reactive_fine}), which is 100\% stable at $\Delta=0$ and collapses entirely once delay is introduced.

\begin{table}[ht]
\centering
\begin{tabular}{lccc}
\toprule
Delay (steps) & Runaway rate & 95\% CI & Excess over $\Delta{=}0$ (pp) \\
\midrule
0  & 0.10 & $[0.04, 0.21]$ & --- \\
2  & 0.12 & $[0.06, 0.24]$ & $+2$ pp \\
4  & 0.20 & $[0.11, 0.33]$ & $+10$ pp \\
6  & 0.30 & $[0.19, 0.44]$ & $+20$ pp \\
8  & 0.36 & $[0.24, 0.50]$ & $+26$ pp \\
10 & 0.44 & $[0.31, 0.58]$ & $+34$ pp \\
14 & 0.56 & $[0.42, 0.69]$ & $+46$ pp \\
20 & 0.62 & $[0.48, 0.74]$ & $+52$ pp \\
30 & 0.72 & $[0.58, 0.83]$ & $+62$ pp \\
\bottomrule
\end{tabular}
\caption{Runaway rate vs.\ repression delay (Experiment~3; $k{=}20$, $N{=}240$, 500 steps, Q-learning agents, 50 seeds). The 10\% baseline at $\Delta{=}0$ is stochastic noise; the ``Excess'' column isolates the delay contribution. \textbf{Conclusion:} delay increases runaway by 62 percentage points (from 10\% to 72\%), confirming Hypothesis~1---delay alone destabilizes the networked system (question~1). pp = percentage points. Wilson score 95\% CIs reported.}
\label{tab:delay_sweep}
\end{table}

\begin{figure}[ht]
\centering
\includegraphics[width=0.85\textwidth]{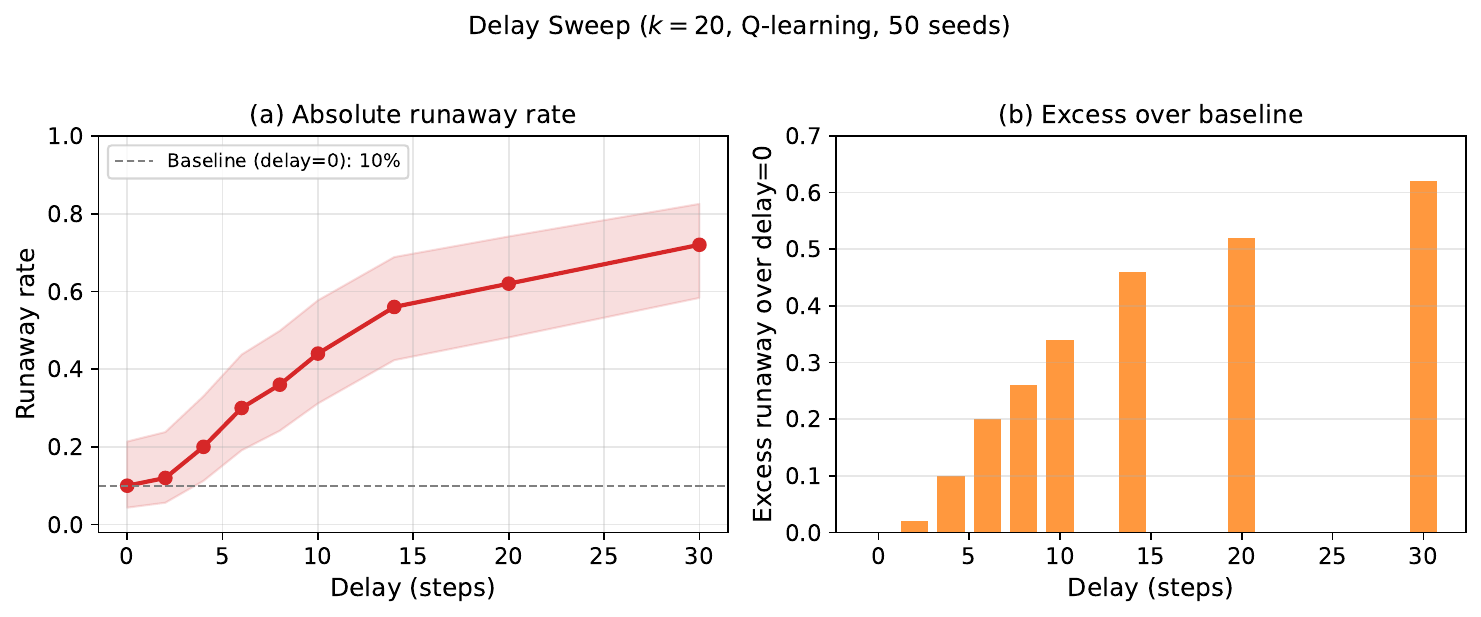}
\caption{Delay sweep (Experiment~3; $k{=}20$, Q-learning, 50 seeds). Left: runaway rate with 95\% Wilson band. Right: excess runaway over the $\Delta{=}0$ baseline. \textbf{Conclusion:} monotonic increase confirms Hypothesis~1: delay alone produces a large, dose-dependent destabilization in the networked simulation.}
\label{fig:delay_sweep}
\end{figure}

\paragraph{Experiment~4: Sharpness sweep.} This experiment tests Hypothesis~2 by holding delay fixed at 15 steps and sweeping $k \in \{3, 5, 7, 10, 15, 20, 30, 40\}$. Higher $k$ should reduce $\Delta_c$ (Equation~\ref{eq:deltac_sigmoid}), so the system should cross from stable to unstable as $k$ increases. Delay of 15 is chosen so that low-$k$ conditions are below threshold while high-$k$ conditions are above it.

At fixed delay${}{=}15$ (Table~\ref{tab:k_sweep}), runaway rate increases from 16\% ($k{=}3$) to 64\% ($k{=}40$). The transition is sharpest between $k{=}7$ (28\%) and $k{=}10$ (52\%), after which the system saturates: once operating far beyond $\Delta_c$, additional sharpening produces only marginal instability gains.

\begin{table}[ht]
\centering
\begin{tabular}{lccc}
\toprule
$k$ & Stable & Oscillatory & Runaway \\
\midrule
3  & 70\% & 14\% & 16\% \\
5  & 56\% & 24\% & 20\% \\
7  & 40\% & 32\% & 28\% \\
10 & 26\% & 22\% & 52\% \\
15 & 26\% & 14\% & 60\% \\
20 & 24\% & 16\% & 60\% \\
30 & 20\% & 18\% & 62\% \\
40 & 24\% & 12\% & 64\% \\
\bottomrule
\end{tabular}
\caption{Regime distribution by sharpness $k$ at fixed delay${}{=}15$ (Experiment~4; $N{=}240$, 500 steps, 50 seeds). \textbf{Conclusion:} runaway increases from 16\% ($k{=}3$) to 64\% ($k{=}40$), with a sharp transition at $k{=}7$--$10$ corresponding to the system crossing $\Delta_c$. This confirms Hypothesis~2 (question~2): sharpness amplifies delay vulnerability.}
\label{tab:k_sweep}
\end{table}

\begin{figure}[ht]
\centering
\includegraphics[width=0.75\textwidth]{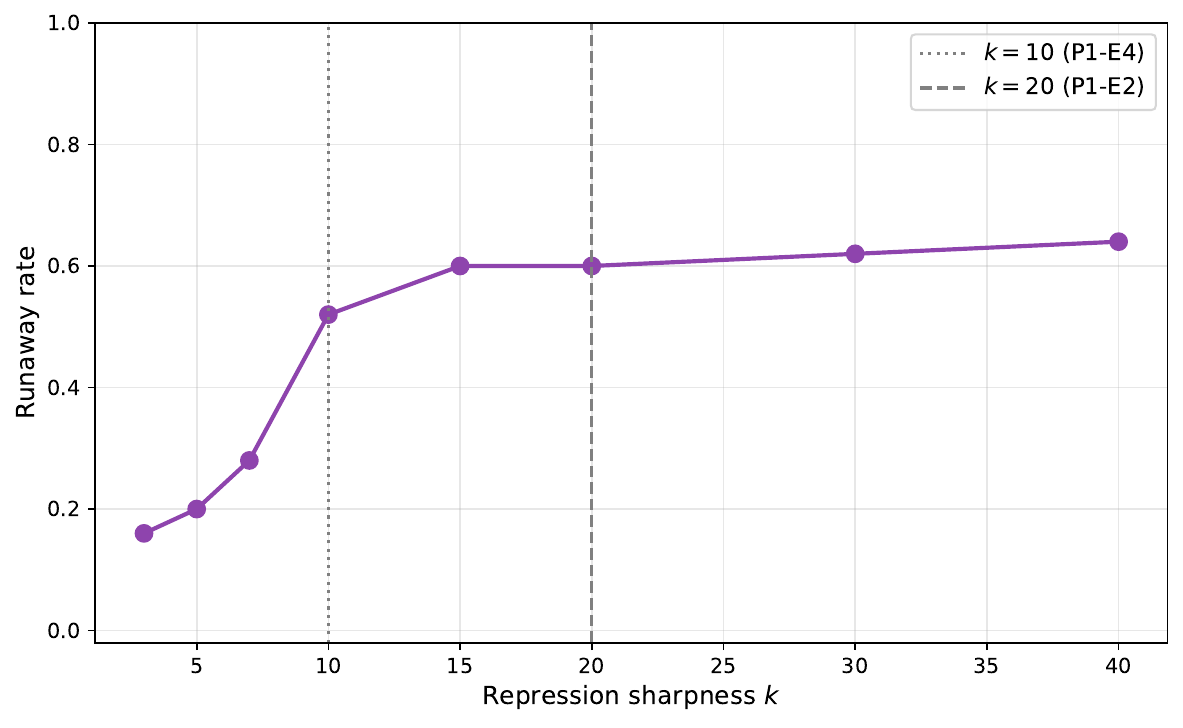}
\caption{Runaway rate vs.\ sharpness $k$ at fixed delay${}{=}15$ (Experiment~4; 50 seeds). Vertical lines: $k{=}10$ (used in Experiment~5) and $k{=}20$ (Experiment~3). \textbf{Conclusion:} the sharp transition at $k{=}7$--$10$ marks the system crossing $\Delta_c$ (Equation~\ref{eq:deltac_sigmoid}). Above this threshold, further sharpening produces diminishing returns, a ceiling effect.}
\label{fig:phase_diagram}
\end{figure}

\subsection{Central experiment: crossed delay $\times$ architecture (Experiment~5)}

This is the central experiment. A two-factor design crosses delay ($\Delta \in \{0, 4, 8, 14, 20\}$) with agent architecture: \emph{fixed-policy} (no environmental response, the null model), \emph{reactive} (threshold heuristic without memory), and \emph{Q-learning} (tabular RL with cumulative value estimates). We use $k{=}10$ to ensure fixed-policy agents remain stable as a clean baseline. The naive expectation is that learning amplifies instability; the alternative is that cumulative learning provides implicit memory that dampens oscillations.

The two-factor crossing reveals a clear hierarchy (Figure~\ref{fig:crossed} and Table~\ref{tab:crossed}).

\begin{figure}[ht]
\centering
\includegraphics[width=0.75\textwidth]{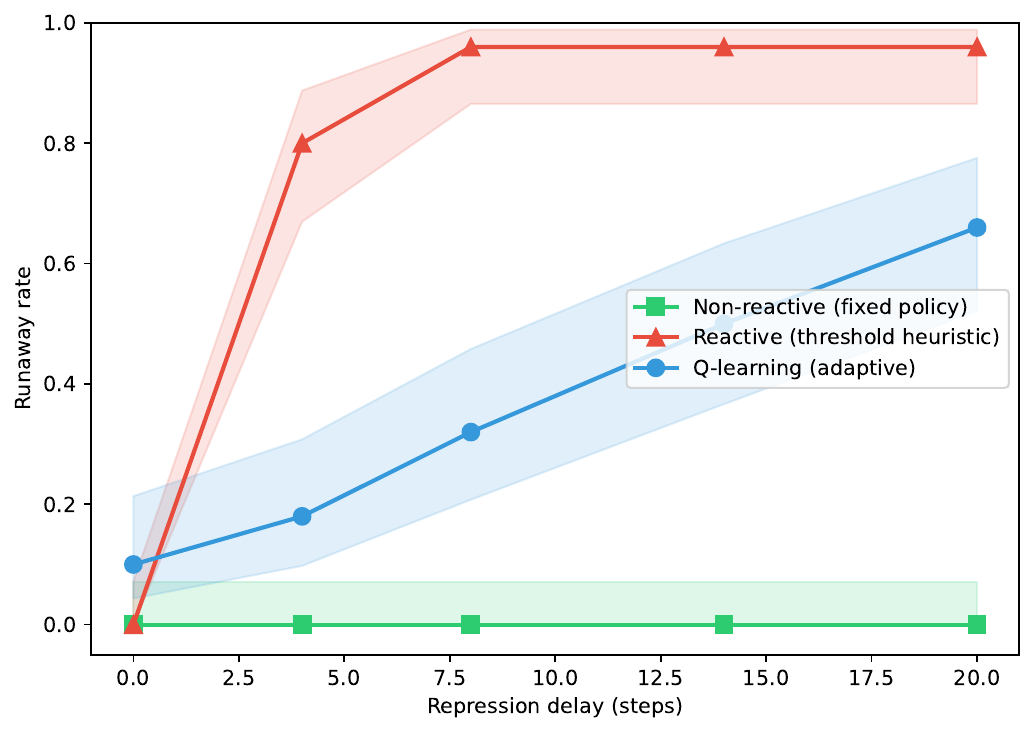}
\caption{Runaway rate vs.\ delay for three agent architectures (Experiment~5, the central experiment; $k{=}10$, $N{=}240$, 500 steps, 50 seeds per cell). Shaded: 95\% Wilson CIs. \textbf{Conclusion:} the hierarchy is reactive (96\%) $>$ Q-learning (66\%) $>$ fixed (0\%). Learning \emph{buffers} rather than amplifies instability (question~3), contradicting the naive expectation. The destabilizing ingredient is memoryless reactivity to delayed signals.}
\label{fig:crossed}
\end{figure}

This experiment uses $k=10$ (lower than the aggressive $k=20$ of Experiment~3) and 500 time steps with $N=240$ agents. The results reveal three distinct regimes:

\begin{enumerate}
\item \textbf{Non-reactive (fixed):} 0\% runaway at all delays (92\% stable, 8\% oscillatory from finite-sample binomial noise in the action draws, occasionally triggering the spectral classifier). These agents cannot exploit temporal structure in the repression signal.
\item \textbf{Reactive (threshold heuristic):} fully stable without delay (100\% stable at delay${}=0$), then a sharp transition to large-amplitude runaway as delay grows: 80\% runaway at delay${}=4$, rising to 96\% at delay${}=8$ (95\% CI $[87\%, 99\%]$). A fine-resolution sweep (Figure~\ref{fig:reactive_fine}) locates the threshold sharply between delay${}=2$ (90\% stable) and delay${}=3$ (62\% runaway), the discrete-time analog of the critical delay $\Delta_c$. This is the cleanest instance of the central claim: an otherwise fully stable population destabilized by delay alone. The reactive policy immediately exploits low-alarm windows but has no memory of past punishment to dampen the resulting oscillations.
\item \textbf{Q-learning:} 10\% runaway at delay${}=0$, rising to 66\% at delay${}=20$ (95\% CI $[52\%, 78\%]$). Q-values encode cumulative punishment experience, which creates inertia against full exploitation of every low-alarm window.
\end{enumerate}

The key finding is the ordering: reactive $>$ Q-learning $>$ fixed in delay sensitivity. The immunity of fixed-policy agents is structural, not contingent on $k$: because their action distribution does not depend on the alarm signal, the feedback loop required for delay-induced instability is broken regardless of parameter values. Reactivity to delayed signals is the destabilizing mechanism; adaptive learning partially buffers this through implicit memory rather than amplifying it.

\begin{table}[ht]
\centering
\begin{tabular}{lcccc}
\toprule
Delay (steps) & Fixed & Reactive & Q-learning & 95\% CI (Q-learning) \\
\midrule
0  & 0\%  & 0\%  & 10\% & $[4\%, 21\%]$ \\
4  & 0\%  & 80\% & 18\% & $[10\%, 31\%]$ \\
8  & 0\%  & 96\% & 32\% & $[21\%, 46\%]$ \\
14 & 0\%  & 96\% & 50\% & $[37\%, 63\%]$ \\
20 & 0\%  & 96\% & 66\% & $[52\%, 78\%]$ \\
\bottomrule
\end{tabular}
\caption{Runaway rate by delay and agent architecture (Experiment~5; $k{=}10$, $N{=}240$, 500 steps, 50 seeds per cell). \textbf{Conclusion:} the ordering fixed (0\%) $<$ Q-learning (up to 66\%) $<$ reactive (up to 96\%) holds at every delay $\Delta \geq 4$; at $\Delta=0$ all three are at the floor (fixed 0\%, reactive 0\%, Q-learning 10\% from exploration noise). Q-learning's implicit punishment memory provides partial protection; memoryless reactivity is catastrophically fragile. This answers question~3: learning buffers, not amplifies, delay-induced instability.}
\label{tab:crossed}
\end{table}

\begin{figure}[ht]
\centering
\includegraphics[width=0.6\textwidth]{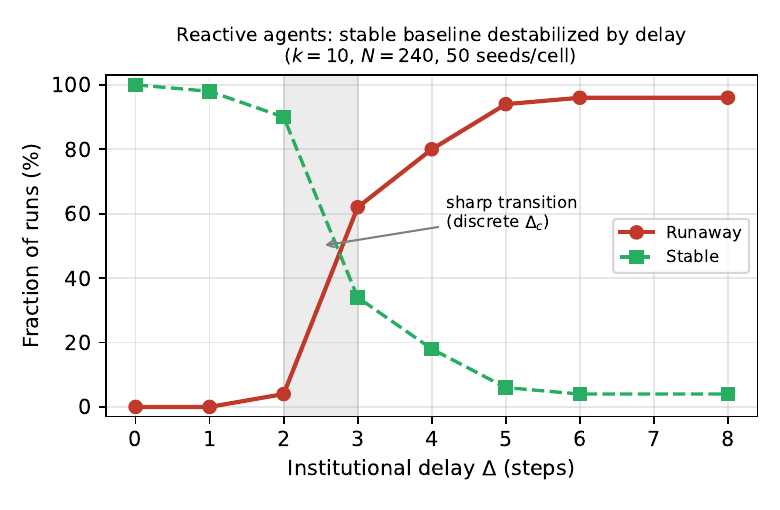}
\caption{Fine-resolution delay sweep for reactive agents ($k{=}10$, $N{=}240$, 50 seeds per delay). The population is fully stable at $\Delta=0$ and through $\Delta=2$ (90\% stable), then transitions sharply to runaway between $\Delta=2$ and $\Delta=3$---the discrete-time analog of the critical delay $\Delta_c$. \textbf{Conclusion:} delay alone, applied to an otherwise stable population, drives the destabilization (question~1).}
\label{fig:reactive_fine}
\end{figure}

\subsection{Exploratory: RL regulator (Experiment~6)}

This experiment compares three governance configurations: fixed-policy agents with a static regulator, Q-learning agents with a static regulator, and Q-learning agents with an RL-trained regulator. The RL regulator selects force $u_t \in \{0, 0.5, 1.0, 1.5, 2.0, 2.5\}$ to minimize $r_{\mathrm{reg}} = -\Rfrac - 0.1\,u_t$, where $\Rfrac$ is the radical fraction and $u_t$ is the control intensity. The regulator shares the institution's information delay, so it cannot circumvent the destabilizing lag. This experiment is exploratory: 50 seeds provide limited statistical power, and we report results as suggestive.

\begin{figure}[ht]
\centering
\includegraphics[width=0.65\textwidth]{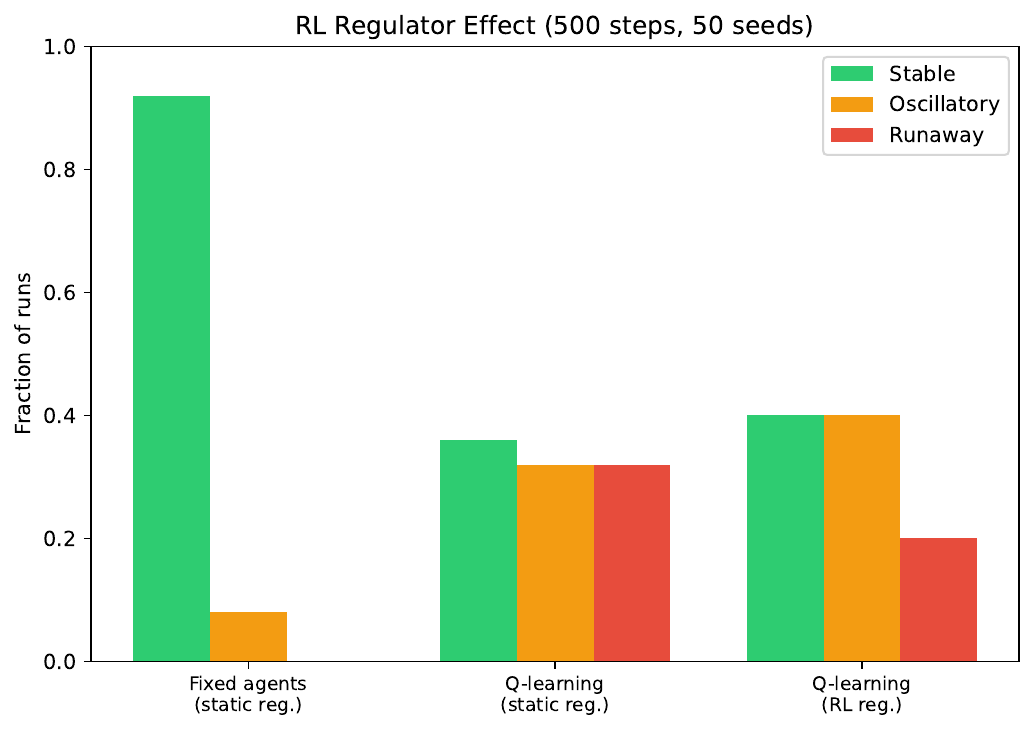}
\caption{Regime distribution for three governance configurations (Experiment~6; 500 steps, 50 seeds). \textbf{Conclusion:} the RL regulator reduces runaway from 32\% to 20\% but increases oscillations from 32\% to 40\%, suggesting mode conversion (disaster into bounded oscillation) rather than full stabilization. The effect is suggestive but not statistically conclusive at 50 seeds (question~4).}
\label{fig:rl_regulator}
\end{figure}

The learning ablation compares three governance configurations (Table~\ref{tab:ablation}). Fixed-policy agents achieve 92\% stability, establishing the baseline. Q-learning agents with a static regulator produce 36\% stable, 32\% oscillatory, and 32\% runaway. Introducing an RL-trained regulator (a Q-learning agent that observes the delayed alarm and radical fraction, selects force $u_t \in \{0, 0.5, 1.0, 1.5, 2.0, 2.5\}$, and receives reward $r_{\mathrm{reg}} = -\Rfrac - 0.1\,u_t$, where $\Rfrac$ is the radical fraction) shifts the distribution to 40\% stable, 40\% oscillatory, and 20\% runaway. This result is suggestive rather than conclusive: the 12 percentage-point reduction in runaway (32\% to 20\%) on 50 seeds yields a confidence interval that includes zero effect. The simultaneous increase in oscillatory outcomes from 32\% to 40\% suggests that if the regulator does help, it converts catastrophic runaway into bounded oscillations rather than restoring stability. We treat this as exploratory evidence that adaptive governance can transform instability modes, not as a confirmed finding.

\begin{table}[ht]
\centering
\begin{tabular}{lccc}
\toprule
Condition & Stable & Oscillatory & Runaway \\
\midrule
Fixed agents (static regulator) & 92\% & 8\% & 0\% \\
Q-learning (static regulator)   & 36\% & 32\% & 32\% \\
Q-learning (RL regulator)       & 40\% & 40\% & 20\% \\
\bottomrule
\end{tabular}
\caption{Regime distribution by governance configuration (Experiment~6; 50 seeds each). \textbf{Conclusion:} adaptive governance converts runaway into oscillation (32\%$\to$20\% runaway, 32\%$\to$40\% oscillatory) rather than restoring stability. This mode conversion is the realistic ceiling for a controller sharing the institution's delay (question~4, exploratory).}
\label{tab:ablation}
\end{table}

\subsection{Summary of findings}

The central result is the two-factor crossing (Experiment~5): reactive agents collapse catastrophically under delay (96\%), Q-learning agents achieve partial resilience (66\%), and non-reactive agents are immune (0\%). Reactivity is the destabilizing mechanism; learning buffers it. The story is simpler than one might expect. The ODE validation (Experiment~1) and discrete mean-field (Experiment~2) confirm that the analytical mechanism is correct and survives discretization. The delay sweep (Experiment~3) and sharpness sweep (Experiment~4) map the dose-response and activation boundary. The RL regulator (Experiment~6) is exploratory: suggestive of mode conversion but not statistically conclusive.

All regime classifications use the adaptive classifier with runaway threshold $R_{\max}>0.40$. The quantitative gap between mean-field thresholds and simulation thresholds reflects the expected difference between an analytically tractable mechanism and a system with agent heterogeneity, stochastic exploration, and network structure.

%% file: sections/06_discussion.tex
\section{Discussion}

The results confirm the core analytical prediction (delay destabilizes) and reveal a surprising empirical finding (learning buffers rather than amplifies). This section discusses three qualitative insights that go beyond the numerical results: why the ODE-to-simulation gap exists and what it teaches us about the mechanism, what the architecture hierarchy reveals about the nature of delay vulnerability, and what practical implications follow for institutional design.

\subsection{Why the quantitative gap between theory and simulation matters (Q1)}

The ODE predicts a sharp bifurcation at $\Delta_c$; the simulation shows a gradual dose-response curve. This gap is not a failure of the theory---it is informative about the mechanism. Three stabilizing buffers are absent from the mean-field model and present in the simulation: stochastic exploration ($\epsilon$-greedy noise damps incipient oscillations), the three-action space (agents can retreat to moderate behavior rather than oscillating between extremes), and spatial heterogeneity (network structure disperses the synchronized oscillations required for large-amplitude instability). These are not defects of the simulation; they are what the mean-field theory trades away for a closed-form answer.

The qualitative lesson is that delay-induced instability is robust to realistic complications, but the sharp threshold of the ODE becomes a gradual erosion of stability in a heterogeneous population. Institutions operating near the theoretical boundary should not expect a clean phase transition; they should expect slowly worsening oscillatory behavior that may be difficult to distinguish from normal fluctuations until runaway is already underway.

\subsection{What the architecture hierarchy reveals about the mechanism (Q3)}

The ordering (reactive agents most fragile, Q-learning agents partially resilient, fixed-policy agents immune) points to a specific causal mechanism rather than a generic correlation between complexity and instability.

The mechanism is a feedback loop: when the alarm drops (due to institutional processing lag), reactive agents immediately escalate. Punishment arrives too late; by then the alarm spike triggers the next oscillation. The critical feature is not that reactive agents are ``simple'' but that they are \emph{memoryless}: each decision is based entirely on the current (stale) signal, with no integration over past experience. Q-learning agents break this loop because their Q-values carry forward the memory of past punishment. The ``soft brake'' is not a design feature---it is an emergent consequence of temporal-difference learning applied to a delayed-feedback environment.

This distinction has a practical implication: the risk factor for delay vulnerability is not whether agents are adaptive, but whether they integrate information over time. Any decision process that reacts only to current signals (whether a simple threshold, a rule-based policy, or a sophisticated model with no memory state) will be vulnerable. Institutions seeking stability should therefore prioritize reducing processing delay over restricting agent adaptiveness. The problem is the lag, not the learning.

\subsection{Policy levers: response sharpness and adaptive governance (Q2, Q4)}

The theory predicts that gradual institutional responses (low $k$) expand the stability margin. The simulation confirms this, but with an important qualification: the protective effect of gradual response is conditional on agent architecture. Fixed-policy agents are stable regardless of sharpness, because they do not close the feedback loop. For adaptive agents, reducing sharpness helps but cannot eliminate delay vulnerability---it only raises the threshold. The policy implication is that institutions face a genuine dilemma: sharp responses are decisive but fragile; gradual responses are resilient but slow to act. Every parameter trades one vulnerability for another.

The second lever is the regulator itself. The RL regulator experiment is exploratory, and the specific numbers should be interpreted cautiously. The qualitative pattern, however, is suggestive: the regulator appears to convert catastrophic runaway into bounded oscillations rather than restoring full stability. This mode conversion---disaster into nuisance---may be the realistic ceiling for any controller that shares the institution's information delay. A regulator that observes the same stale signal as the repression mechanism cannot anticipate the population's response; it can only react to the consequences of its own delayed actions, introducing a secondary feedback loop that sustains limit cycles while preventing unbounded growth. We report this as a pattern worth investigating, not a confirmed finding: 50 seeds is enough to see a pattern but not enough to bet on it.

\subsection{Limitations}

This model is deliberately simple, and the simplifications cost something. The Q-learning model is deliberately minimal and does not capture sophisticated strategic reasoning, communication, or coordination among agents. Real adaptive agents in social or technical systems may exhibit richer behavioral repertoires. The three-action space is a simplification; continuous action spaces might produce different stability characteristics. The regime classifier uses fixed thresholds, and the precise quantitative rates depend on these choices. We verified that the central ordering (fixed $\leq$ Q-learning $\leq$ reactive in delay sensitivity) is preserved across runaway thresholds from $R_{\max}>0.30$ through $R_{\max}>0.50$, though the absolute rates shift substantially (e.g., reactive at delay${}=20$ ranges from 100\% at threshold 0.30 to 64\% at threshold 0.50). A convergence diagnostic confirms that Q-learning agents reach stable behavior by step 100--200: the mean radical fraction and its standard deviation change by less than 0.001 between the last two 100-step windows at all tested delays, indicating that 500 steps is sufficient for the Q-values to equilibrate.

The simulation uses finite populations ($N=240$) on specific graph realizations; larger populations or different graph families might shift the quantitative boundaries. The architecture hierarchy itself, however, is robust to the network parameterization: re-running the crossed delay~$\times$~architecture design on a denser modular graph (4 communities, $p_{\mathrm{in}}=0.15$, $p_{\mathrm{out}}=0.02$) reproduces the same ordering---fixed agents immune (0\% at all delays), reactive agents catastrophic (100\% by delay~8), Q-learning agents partially resilient (77\% at delay~20)---with the denser network somewhat more unstable (reactive reaches 27\% runaway even at delay~0), consistent with its stronger influence coupling.

The reactive baseline uses a single threshold heuristic; other reactive architectures (imitation dynamics, best-response with noise) might show different levels of delay sensitivity, though the qualitative finding (that memoryless reactivity is more fragile than learning) should hold for any policy that lacks temporal integration. The RL regulator's Q-update bootstraps toward a next-state computed from the current (undelayed) alarm rather than the next delayed observation, introducing a minor state-transition inconsistency; since Experiment~6 is presented as exploratory evidence, this does not affect the paper's confirmed findings. Finally, the theory assumes a single aggregate alarm signal; systems with multiple observation channels or local feedback might exhibit different instability structures. Reality, of course, does not present its instabilities one at a time.

\subsection{Relationship to prior work}

Our analytical results connect directly to the Hopf bifurcation analysis of \citet{wesson2016hopf} for two-strategy delayed replicator dynamics, extending their framework to include an asymmetric institutional response function rather than symmetric frequency-dependent payoffs. The discrete-time instability amplification we observe in Experiment~2 echoes the finding of \citet{alboszta2004stability} that delay can destabilize the ESS in discrete replicator dynamics; \citet{iijima2012delayed} documents the complementary regime in which stability is instead preserved, underscoring that the destabilizing effect of delay is model-dependent. The network simulation builds on the evolutionary-dynamics-on-graphs tradition \citep{lieberman2005evolutionary, ohtsuki2006simple, szabo2007evolutionary}, adding delayed institutional feedback as a novel destabilizing mechanism distinct from the structural effects studied in that literature. The interaction between learning and delay connects to recent work on reward delays in multi-agent reinforcement learning \citep{zhang2023marl}, though our focus on population-level regime transitions rather than individual convergence is novel.

The companion paper extends the present mechanism to noisy selective control on modular networks, addressing the question of how imperfect observation and targeted governance alter stability conditions. That extension is necessary because real institutions rarely observe true system state directly, and the consequences of classification errors depend strongly on network position, a consideration we deliberately set aside here so that delay gets a fair hearing on its own.

%% file: sections/07_conclusion.tex
\section{Conclusion}

We studied how institutional processing delay affects the stability of a multi-agent system in which agents adapt their behavior in response to delayed punishment signals. We derived a closed-form critical delay $\Delta_c$ for the delayed replicator equation with a sigmoid response function and proved that the resulting Hopf bifurcation is supercritical for the entire admissible parameter class. We then tested three agent architectures (fixed-policy, reactive, and Q-learning) in a networked simulation with 240 agents across 50 seeds per condition.

The central result answers question~3: learning does not amplify delay-induced instability---it partially buffers it. Non-reactive agents are immune to delay (0\% runaway), reactive agents collapse catastrophically (96\%), and Q-learning agents achieve partial resilience (66\%). The delay sweep (question~1) confirms a monotonic dose-response reaching $+62$ percentage points of excess runaway, the sharpness sweep (question~2) confirms that sharper institutional responses lower the stability threshold, and the RL regulator experiment (question~4) suggests that adaptive governance converts runaway into bounded oscillations rather than restoring stability. The destabilizing ingredient, then, is memoryless reactivity to delayed signals, not learning: agents that immediately exploit low-alarm windows trigger oscillatory feedback loops, while agents with cumulative memory resist this trap. The practical implication is twofold. Institutions should prefer gradual responses over sharp thresholds, because sharpness amplifies the instability that reactivity exploits; and adaptive memory is not the disease but the closest thing to a cure this system has.

Three extensions are natural: scaling to larger populations and alternative network families to test how far the architecture hierarchy holds; continuous action spaces, which may produce different stability characteristics; and combining delay with noisy observation, which the companion paper addresses for selective governance on modular networks.

%% file: sections/appendix_hopf_proof.tex
\section{Proof of Supercritical Hopf Bifurcation}
\label{app:hopf_proof}

This appendix establishes Proposition~\ref{prop:supercritical}: the Hopf bifurcation at $\Delta=\Delta_c$ is supercritical for all admissible sigmoid parameters. The proof follows the Hassard--Kazarinoff--Wan center manifold reduction \citep{hassard1981theory} applied to the DDE phase space \citep{hale1993introduction}. We proceed in five steps, summarized in Table~\ref{tab:proof_roadmap}.

\begin{table}[ht]
\centering
\begin{tabular}{clll}
\toprule
Step & Goal & Key result & Equation \\
\midrule
1 & Linearize and set up phase space & Hayes equation, $\beta=-b$ & \eqref{eq:delayed_rep} \\
2 & Eigenvectors and adjoint normalization & $\bar{D}=1/(1+i\pi/2)$ & \\
3 & Nonlinear expansion to cubic order & $f_2$, $f_3$ & \eqref{eq:f2f3} \\
4 & Center manifold and normal form & $c_1(0)$ & \eqref{eq:c1_appendix} \\
5 & Sign analysis of $\operatorname{Re}(c_1)$ & $\operatorname{Re}(c_1)<0$ universally & \eqref{eq:Re_c1_closed_app} \\
\bottomrule
\end{tabular}
\caption{Roadmap of the center-manifold reduction for Proposition~\ref{prop:supercritical}. \textbf{Conclusion:} the five steps culminate in $\operatorname{Re}(c_1)<0$ for all admissible sigmoid parameters, establishing that the Hopf bifurcation is supercritical (stable bounded limit cycles rather than explosive growth).}
\label{tab:proof_roadmap}
\end{table}

\paragraph{Notation.} Throughout this appendix: $\rho = p(x^*) = a/C \in (0,1)$ is the equilibrium repression probability, $\alpha_0 = x^*(1-x^*)$, $\alpha_1 = 1-2x^*$, and $b = C\alpha_0 p'(x^*)$. All derivatives of $p$ are evaluated at $x^*$.

\begin{proof}[Proof of Proposition~\ref{prop:supercritical}]

\medskip
\noindent\textbf{Step 1: Linearization and phase space.}
Write $u(t)=x(t)-x^*$ and define $F(u,u_d)$ as the full nonlinearity, where $u_d=u(t-\Delta)$. Since $a-Cp(x^*)=0$ at equilibrium, the instantaneous linearization vanishes: $\partial_u F(0,0)=0$. The linearized equation is
\[
\dot{u}(t)=\beta\,u(t-\Delta), \qquad \beta=-b<0,
\]
the Hayes equation, whose phase space is $C=C([-\Delta_c,0],\mathbb{R})$.

\medskip
\noindent\textbf{Step 2: Eigenvectors and adjoint normalization.}
At $\Delta=\Delta_c$, the characteristic equation $\lambda+be^{-\lambda\Delta}=0$ has roots $\lambda=\pm i\omega$ with $\omega=b$. Define:
\begin{align*}
\text{Eigenvector:} \quad & \varphi(\theta)=e^{i\omega\theta}, \quad \theta\in[-\Delta_c,0]. \\
\text{Adjoint eigenvector:} \quad & \psi(s)=\bar{D}\,e^{i\omega s}, \quad s\in[0,\Delta_c].
\end{align*}
Normalizing via the Hale--Verduyn Lunel bilinear form $\langle\psi,\varphi\rangle=1$ yields
\[
\bar{D}=\frac{1}{1+i\pi/2}.
\]

\medskip
\noindent\textbf{Step 3: Nonlinear expansion to cubic order.}
The full nonlinearity is
\[
F(u,u_d) = (\alpha_0 + \alpha_1 u - u^2)\bigl[-Cp'u_d - \tfrac{1}{2}Cp''u_d^2 - \tfrac{1}{6}Cp'''u_d^3 + \cdots\bigr],
\]
where $\alpha_0 = x^*(1{-}x^*)$ and $\alpha_1 = 1{-}2x^*$. Collecting by order:
\begin{equation}
\label{eq:f2f3}
\begin{aligned}
f_2(u,u_d) &= -C\alpha_1 p'\; u\,u_d
              &&-\; \tfrac{1}{2}\,C\alpha_0 p''\; u_d^2, \\[4pt]
f_3(u,u_d) &= \phantom{-}C p'\; u^2 u_d
              &&-\; \tfrac{1}{2}\,C\alpha_1 p''\; u\,u_d^2
              \;-\; \tfrac{1}{6}\,C\alpha_0 p'''\; u_d^3.
\end{aligned}
\end{equation}
For the sigmoid response, the derivatives at $x^*$ evaluate to:
\begin{align}
p'  &= k\,\rho(1-\rho), \nonumber\\
p'' &= k^2\,\rho(1-\rho)(1-2\rho), \label{eq:sigmoid_derivs}\\
p'''&= k^3\,\rho(1-\rho)\bigl(1-6\rho(1-\rho)\bigr). \nonumber
\end{align}

\medskip
\noindent\textbf{Step 4: Center manifold reduction and normal form.}
On the center manifold, $u\approx z+\bar{z}$ and $u_d\approx i(\bar{z}-z)$ (since $e^{-i\omega\Delta_c}=-i$). Projecting $f_2$ onto the center eigenspace gives the second-order normal form coefficients $g_{20}$, $g_{11}$, $g_{02}$.

The center manifold corrections $W_{20}(\theta)$ and $W_{11}(\theta)$ satisfy the boundary value problems
\begin{align*}
(2i\omega - \mathcal{A})\,W_{20} &= H_{20}, \\
-\mathcal{A}\,W_{11} &= H_{11},
\end{align*}
where $\mathcal{A}$ is the infinitesimal generator of the linearized semigroup. The right-hand sides $H_{20}$, $H_{11}$ are determined by projecting the quadratic nonlinearity onto the stable complement. The resulting values at $\theta=0$ and $\theta=-\Delta_c$ are verified symbolically (supplementary code) to satisfy their boundary conditions identically for all parameter values.

These corrections yield the third-order coefficient $g_{21}$ and the first Lyapunov coefficient:
\begin{equation}
c_1(0) \;=\; \frac{i}{2\omega}
  \Bigl(\,g_{20}\,g_{11} - 2\,|g_{11}|^2 - \tfrac{1}{3}\,|g_{02}|^2\,\Bigr)
  \;+\; \frac{g_{21}}{2}.
\label{eq:c1_appendix}
\end{equation}

\medskip
\noindent\textbf{Step 5: Sign analysis of $\operatorname{Re}(c_1)$.}
Carrying out the full computation and rationalizing all complex denominators (supplementary code provides the symbolic derivation), we obtain:
\begin{equation}
\operatorname{Re}(c_1(0))=\underbrace{\frac{Ck\rho}{5\alpha_0(4+\pi^2)}}_{>0}\cdot\;\underbrace{\mathcal{N}(k,\rho,\alpha_0,\alpha_1)}_{\text{sign?}},
\label{eq:Re_c1_closed_app}
\end{equation}
where the numerator factors as $\mathcal{N}=(\rho-1)\cdot\mathcal{B}$, with $\rho-1<0$ for all $\rho\in(0,1)$. It remains to show $\mathcal{B}>0$.
\end{proof}

\begin{lemma}[Positivity of $\mathcal{B}$]
\label{lem:B_positive}
For all $\rho\in(0,1)$, $k>0$, and $x^*\in(0,1)$, define
\begin{align*}
\mathcal{B} \;=\;&
  \underbrace{2\,Q(\rho)\;(\alpha_0 k)^2}_{\text{term } A}
  \;-\; \underbrace{(7\pi{-}8)(2\rho{-}1)\;(\alpha_0 k)\,\alpha_1}_{\text{cross-term } B}  \\[4pt]
  &+\; \underbrace{2(3\pi{-}2)\;\alpha_1^2}_{\text{term } C}
  \;+\; \underbrace{10\pi\,\alpha_0}_{\text{remainder}},
\end{align*}
where $Q(\rho)=-(7\pi-8)\rho(1-\rho)+(3\pi-2)$. Then $\mathcal{B}>0$.
\end{lemma}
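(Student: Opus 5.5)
Set $X=\alpha_0 k>0$ and $Y=\alpha_1$. Then $\mathcal{B}$ is a binary quadratic form in $(X,Y)$ plus the remainder $10\pi\alpha_0$:
\[
\mathcal{B}=2Q(\rho)X^2-(7\pi-8)(2\rho-1)XY+2(3\pi-2)Y^2+10\pi\alpha_0 .
\]
Since $x^*\in(0,1)$, the remainder is strictly positive. It therefore suffices to show that the quadratic part is positive semidefinite (indeed positive definite) for every $\rho\in(0,1)$. The variables $X,Y$ are not independent, since both depend on $x^*$. Proving definiteness over all of $\mathbb{R}^2$ avoids having to track that coupling.

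\textbf{Step 1: positivity of the diagonal coefficients.} Write $q=\rho(1-\rho)\in(0,1/4]$. Then $Q(\rho)=(3\pi-2)-(7\pi-8)q\ge (3\pi-2)-(7\pi-8)/4=(5\pi)/4>0$. The coefficient $2(3\pi-2)$ of $Y^2$ is also clearly positive.

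\textbf{Step 2: the discriminant.} Positive definiteness is equivalent to
\[
(7\pi-8)^2(2\rho-1)^2<4\cdot 2Q(\rho)\cdot 2(3\pi-2)=16(3\pi-2)Q(\rho).
\]
Use $(2\rho-1)^2=1-4q$ and write $c=7\pi-8$, $d=3\pi-2$, so that $Q=d-cq$. The condition becomes the linear inequality in $q$
\[
c^2(1-4q)<16d(d-cq)\quad\Longleftrightarrow\quad c^2-16d^2<4cq\,(c-4d).
\]
Here $c-4d=-5\pi<0$, so the right-hand side is decreasing in $q$, and the worst case is $q=1/4$, i.e.\ $\rho=1/2$. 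At that endpoint we need $c^2-16d^2<c(c-4d)$, which reduces to $4cd<16d^2$, i.e.\ $c<4d$. This says $7\pi-8<12\pi-8$, which is true. At $q=0$ the condition is $c^2<16d^2$, i.e.\ $c<4d$ again. Because the inequality is linear in $q$ and holds at both endpoints, it holds on the whole interval.

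\textbf{Step 3: conclusion.} The quadratic form is positive definite, hence nonnegative at $(X,Y)$ (and in fact positive, since $X>0$). Adding the remainder gives $\mathcal{B}\ge 10\pi\alpha_0>0$.

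The main obstacle is Step 2, specifically the reduction of the discriminant condition to a one-variable linear inequality via $q=\rho(1-\rho)$. This is the step to double-check numerically, together with the constants $7\pi-8$ and $3\pi-2$ as they appear in the lemma. If the form had turned out to be indefinite, the fallback would be to use the coupling $\alpha_1^2=1-4\alpha_0$ explicitly. Given the computation above, that fallback should be unnecessary.
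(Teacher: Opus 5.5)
Your proof is correct and follows the same route as the paper. You bound $Q(\rho)\ge 5\pi/4$, show the quadratic part in $(\alpha_0 k,\alpha_1)$ is positive definite via the discriminant condition $16Q(3\pi-2)>(7\pi-8)^2(2\rho-1)^2$, reduce that condition with $q=\rho(1-\rho)$ to an inequality linear in $q$ whose worst case is $\rho=1/2$, and then add the positive remainder $10\pi\alpha_0$; your constants (with $c-4d=-5\pi$) reproduce the paper's $-20\pi(7\pi-8)$ and $95\pi^2-80\pi$ exactly.
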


\begin{proof}
Since $\rho(1-\rho)\le 1/4$, we have $Q(\rho)\ge (3\pi-2)-(7\pi-8)/4 = 5\pi/4 > 0$.

The first three terms form a quadratic form in the variables $v_1=\alpha_0 k$ and $v_2=\alpha_1$:
\[
\mathcal{B}_0(v_1,v_2) = 2Q\,v_1^2 - (7\pi{-}8)(2\rho{-}1)\,v_1 v_2 + 2(3\pi{-}2)\,v_2^2.
\]
This is positive definite whenever the discriminant condition $4AC > B^2$ holds, i.e.,
\[
16\,Q\,(3\pi-2) > (7\pi-8)^2(2\rho-1)^2.
\]
Substituting $\mu=\rho(1-\rho)$ so that $(2\rho-1)^2=1-4\mu$, the condition becomes
\[
[-20\pi(7\pi-8)]\,\mu + 95\pi^2-80\pi > 0.
\]
Since the coefficient of $\mu$ is negative, the worst case is $\mu=1/4$ (i.e., $\rho=1/2$), giving $20\pi(3\pi-2)>0$. Hence $\mathcal{B}_0$ is positive definite for all $\rho\in(0,1)$.

Adding the strictly positive remainder $10\pi\alpha_0>0$ gives $\mathcal{B} = \mathcal{B}_0 + 10\pi\alpha_0 > 0$.
\end{proof}

\noindent\textbf{Conclusion of the proof.}
Since $\rho-1<0$ and $\mathcal{B}>0$ (Lemma~\ref{lem:B_positive}), we have $\mathcal{N}=(\rho-1)\mathcal{B}<0$. The prefactor in \eqref{eq:Re_c1_closed_app} is strictly positive. Therefore $\operatorname{Re}(c_1(0))<0$ for all admissible parameters.

\medskip

\paragraph{Bifurcation direction and orbital stability.} The standard Hassard--Kazarinoff--Wan quantities are
\[
\mu_2 = -\frac{\operatorname{Re}(c_1(0))}{\operatorname{Re}(d\lambda/d\Delta)\big|_{\Delta_c}} > 0, \qquad
\beta_2 = 2\operatorname{Re}(c_1(0)) < 0.
\]
Since $\mu_2>0$, the bifurcation is supercritical (periodic orbits exist for $\Delta>\Delta_c$). Since $\beta_2<0$, the bifurcating periodic orbits are orbitally stable.

\paragraph{Numerical validation.}
\begin{table}[ht]
\centering
\begin{tabular}{lrl}
\toprule
Quantity & Value & Interpretation \\
\midrule
$\operatorname{Re}(c_1(0))$ & $-40.85$ & $<0$: supercritical \\
$\mu_2$ & $15.95$ & $>0$: orbits exist above $\Delta_c$ \\
$\beta_2$ & $-81.70$ & $<0$: orbits are stable \\
\bottomrule
\end{tabular}
\caption{Bifurcation quantities at representative parameters $(a,C,k,x_c)=(2,5,10,0.5)$. \textbf{Conclusion:} $\operatorname{Re}(c_1)<0$, $\mu_2>0$, and $\beta_2<0$ jointly confirm a supercritical Hopf bifurcation, with stable limit cycles emerging for $\Delta>\Delta_c$.}
\label{tab:hopf_numerics}
\end{table}

Three independent checks validate the algebraic reduction:
\begin{enumerate}
\item \textbf{Grid evaluation:} on a $30\times60\times30$ grid (30 values of $\rho$ uniform in $(0,1)$, 60 values of $k$ log-uniform in $[0.1,10^4]$, 30 values of $x_c$ uniform in $[0.02,0.98]$), $\operatorname{Re}(c_1)$ is strictly negative at every \emph{admissible} point---those with an interior equilibrium $x^*\in(0,1)$, which is the hypothesis of the proposition. Grid points with no interior equilibrium fall outside the proposition's scope and are excluded.
\item \textbf{DDE integration:} Direct numerical integration confirms continuous amplitude growth from zero above $\Delta_c$ for all tested parameter sets.
\item \textbf{BVP residuals:} Boundary condition residuals for $W_{20}$ and $W_{11}$ remain below $10^{-12}$ across $10^3$ random parameter draws.
\end{enumerate}
Supplementary scripts (Python/SymPy and Wolfram Language) reproduce all symbolic derivations and numerical checks.

%% file: sections/appendix_experiment_details.tex
\section{Experiment Details}
\label{app:experiments}

This appendix provides implementation details for all experiments reported in the main text. Full experiment configurations are stored as JSON manifests in \texttt{experiments/configs/} and each run produces a deterministic output bundle.

\subsection{Output structure}

Each experimental run produces three artifacts: a \texttt{history.csv} file containing the full time series of behavioral fractions, alarm, repression probability, and punishment counts at each time step; a \texttt{summary.json} file containing aggregate statistics computed over the final 50\% of the simulation horizon and a regime label; and a \texttt{config\_resolved.json} file recording the exact parameter values used. Each sweep directory additionally produces a \texttt{summary.csv} aggregating all per-run summaries. The regime label is assigned by the adaptive classifier (spectral analysis with runaway threshold $R_{\max}>0.40$); the old fixed-threshold classifier is also recorded as \texttt{regime\_fixed} for comparison.

\subsection{Common parameters}

Unless otherwise noted, all networked simulations use $N=240$ agents on a stochastic block model graph with 6 communities, intra-community edge probability $p_{\mathrm{in}}=0.08$, inter-community edge probability $p_{\mathrm{out}}=0.004$, and bridge fraction 12\% (defined by betweenness centrality). Agent Q-learning parameters: $\alpha=0.10$, $\epsilon=0.08$, $\gamma=0.95$. Influence coupling strength $\lambda=0.22$; alarm threshold $A_c=0.72$. Content types are disabled for all Paper~1 experiments (\texttt{enable\_content=false}; agents receive neutral payoffs with no content-type bonuses). All experiments use 50 seeds (1--50).

\subsection{Experiment catalog}

Table~\ref{tab:experiment_summary} summarizes all six experiments at a glance; the paragraphs that follow give the full configuration for each.

\begin{table}[ht]
\centering
\footnotesize
\setlength{\tabcolsep}{4pt}
\begin{tabular}{@{}clp{2.2cm}p{2.3cm}cp{3.2cm}@{}}
\toprule
Exp. & System & Varied factor & Fixed parameters & Runs & Headline finding \\
\midrule
1 & Delayed replicator ODE & $\Delta/\Delta_c \in [0,4]$ & $a{=}2$, $C{=}7$, $k{=}10$, $x_c{=}0.72$ & 80 & Hopf bifurcation at $\Delta_c$ exactly; supercritical \\
2 & Discrete mean-field & step size $\eta$, $\Delta$ & same as Exp.~1 & grid & discretization shrinks the stability margin monotonically \\
3 & Network, Q-learning & $\Delta \in \{0\text{--}30\}$ & $k{=}20$, $N{=}240$ & 450 & runaway $10\% \to 72\%$ ($+62$ pp) \\
4 & Network, Q-learning & $k \in \{3\text{--}40\}$ & $\Delta{=}15$, $N{=}240$ & 400 & runaway $16\% \to 64\%$; knee at $k{=}7\text{--}10$ \\
5 & Network, crossed & $\Delta \times$ architecture & $k{=}10$, $N{=}240$ & 750 & reactive $96\% >$ Q-learning $66\% >$ fixed $0\%$ \\
6 & Network, governance & regulator type & $\Delta{=}6$, $k{=}10$ & 150 & RL regulator converts runaway to oscillation (exploratory) \\
\bottomrule
\end{tabular}
\caption{Summary of all Paper~1 experiments. Experiments~1--2 validate the analytical theory (the ODE and its discretization); Experiments~3--6 test it in the networked simulation with $N{=}240$ agents and 50 seeds per condition. ``Runs'' counts independent simulations: Experiment~1 counts ODE integrations and Experiment~2 is a deterministic $\Delta\times\eta$ grid, while Experiments~3--6 are seeded ($9{\times}50$, $8{\times}50$, $3{\times}5{\times}50$, $3{\times}50$). \textbf{Conclusion:} the delay-instability mechanism survives every added layer of realism, and the central crossing (Experiment~5) isolates reactivity---not learning---as the destabilizing ingredient. All headline numbers reproduce from the frozen \texttt{summary.csv} files under \texttt{experiments/results/paper1\_v2/} (for Experiment~5, group by the \texttt{learning\_mode} column).}
\label{tab:experiment_summary}
\end{table}

\paragraph{Experiment~1: ODE validation.} Numerical integration of the delayed replicator equation~\eqref{eq:delayed_rep} using Euler stepping for the DDE (global truncation error $O(\mathrm{d}t)$; at $\mathrm{d}t=0.01$ the bifurcation point is resolved to within 1\% of $\Delta_c$, verified by halving the step size). Parameters: $a=2.0$, $C=7.0$, $k=10$, $x_c=0.72$. Delay swept from $0$ to $4\Delta_c$ in 80 values. Integration horizon: 120 time units with step size $\mathrm{d}t=0.01$. Output: bifurcation diagram, representative trajectories, and sharpness sweep.

\paragraph{Experiment~2: Discrete mean-field.} Euler discretization of the delayed replicator equation with step size $\eta$ swept over $\{0.01, 0.02, 0.05, 0.1, 0.2, 0.5, 1.0\}$. Same base parameters as Experiment~1. Delay swept from 0 to 60 steps jointly with $\eta$ to produce a phase diagram. Each condition run for 2000 steps with deterministic initial conditions at $x^*+0.05$.

\paragraph{Experiment~3: Delay sweep.} Full networked simulation with common parameters above plus $k=20$, 500 time steps. Delay swept over $\{0, 2, 4, 6, 8, 10, 14, 20, 30\}$ steps. Static regulator with force $u_t=1.0$. Seeds: 50 per delay level.

\paragraph{Experiment~4: Sharpness sweep.} Same base configuration as Experiment~3 with delay fixed at 15 steps. Sharpness $k$ swept over $\{3, 5, 7, 10, 15, 20, 30, 40\}$. Seeds: 50 per sharpness level.

\paragraph{Experiment~5: Crossed delay $\times$ architecture.} Two-factor design crossing delay $\in\{0, 4, 8, 14, 20\}$ with agent architecture $\in\{\text{fixed}, \text{reactive}, \text{Q-learning}\}$. Fixed agents use stationary action weights $[0.65, 0.27, 0.08]$ for $[L, M, R]$, chosen to approximate the empirical radical fraction at Q-learning convergence without delay and thus provide a matched non-adaptive baseline. Reactive agents use a threshold heuristic that responds to the delayed alarm, local influence signal, and punishment state but maintains no memory across steps. Q-learning agents use $\alpha=0.10$, $\epsilon=0.08$. Default sharpness $k=10$. Simulation horizon: 500 steps. Seeds: 50 per cell.

\paragraph{Experiment~6: RL regulator.} Three conditions: (1) fixed agents with static regulator ($u_t=1.0$), (2) Q-learning agents with static regulator, (3) Q-learning agents with RL regulator. The RL regulator is a tabular Q-learning agent ($\alpha_{\mathrm{reg}}=0.05$, $\epsilon_{\mathrm{reg}}=0.10$, $\gamma_{\mathrm{reg}}=0.95$) observing a discretized state (4 alarm buckets $\times$ 3 radical-fraction buckets) and selecting $u_t$ from $\{0.0, 0.5, 1.0, 1.5, 2.0, 2.5\}$. Regulator reward: $r_{\mathrm{reg}} = -x_R - 0.1 u_t$. Default delay ($\Delta=6$ steps) and sharpness ($k=10$). Simulation horizon: 500 steps. Seeds: 50 per condition.

\subsection{Reproducibility}

All results are generated from frozen random seeds (seeds 1--50 for each condition). The simulation code is deterministic given a seed, configuration, and Python/NumPy version. Results reported in the main text are stored under \texttt{experiments/results/paper1\_v2/}. The regime classifier thresholds are documented in \texttt{experiments/src/autonomy\_lab/metrics.py} and were fixed prior to running the final experiments.

%% file: main.bbl
\begin{thebibliography}{35}
\providecommand{\natexlab}[1]{#1}
\providecommand{\url}[1]{\texttt{#1}}
\expandafter\ifx\csname urlstyle\endcsname\relax
  \providecommand{\doi}[1]{doi: #1}\else
  \providecommand{\doi}{doi: \begingroup \urlstyle{rm}\Url}\fi

\bibitem[Alboszta and Miekisz(2004)]{alboszta2004stability}
Jan Alboszta and Jacek Miekisz.
\newblock Stability of evolutionarily stable strategies in discrete replicator
  dynamics with time delay.
\newblock \emph{Journal of Theoretical Biology}, 231\penalty0 (2):\penalty0
  175--179, 2004.
\newblock \doi{10.1016/j.jtbi.2004.06.012}.

\bibitem[Ben-Khalifa et~al.(2018)Ben-Khalifa, El-Azouzi, and
  Hayel]{benkhalifa2018distributed}
Nesrine Ben-Khalifa, Rachid El-Azouzi, and Yezekael Hayel.
\newblock Discrete and continuous distributed delays in replicator dynamics.
\newblock \emph{Dynamic Games and Applications}, 8\penalty0 (4):\penalty0
  713--732, 2018.
\newblock \doi{10.1007/s13235-017-0225-7}.

\bibitem[Bouteiller et~al.(2021)Bouteiller, Ramstedt, Beltrame, Pal, and
  Binas]{bouteiller2020delays}
Yann Bouteiller, Simon Ramstedt, Giovanni Beltrame, Christopher Pal, and
  Jonathan Binas.
\newblock Reinforcement learning with random delays.
\newblock In \emph{International Conference on Learning Representations
  (ICLR)}, 2021.

\bibitem[Chen et~al.(2020)Chen, Xu, Liu, Li, and Zhao]{chen2020delay}
Baiming Chen, Mengdi Xu, Zuxin Liu, Liang Li, and Ding Zhao.
\newblock Delay-aware multi-agent reinforcement learning for cooperative and
  competitive environments.
\newblock \emph{arXiv preprint arXiv:2005.05441}, 2020.

\bibitem[Derman et~al.(2021)Derman, Dalal, and Mannor]{derman2021acting}
Esther Derman, Gal Dalal, and Shie Mannor.
\newblock Acting in delayed environments with non-stationary markov policies.
\newblock In \emph{International Conference on Learning Representations
  (ICLR)}, 2021.

\bibitem[Freeman(1977)]{freeman1977betweenness}
Linton~C. Freeman.
\newblock A set of measures of centrality based on betweenness.
\newblock \emph{Sociometry}, 40\penalty0 (1):\penalty0 35--41, 1977.
\newblock \doi{10.2307/3033543}.

\bibitem[Girvan and Newman(2002)]{girvan2002community}
Michelle Girvan and Mark E.~J. Newman.
\newblock Community structure in social and biological networks.
\newblock \emph{Proceedings of the National Academy of Sciences}, 99\penalty0
  (12):\penalty0 7821--7826, 2002.
\newblock \doi{10.1073/pnas.122653799}.

\bibitem[Gronauer and Diepold(2022)]{gronauer2022marl}
Sven Gronauer and Klaus Diepold.
\newblock Multi-agent deep reinforcement learning: A survey.
\newblock \emph{Artificial Intelligence Review}, 55:\penalty0 895--943, 2022.
\newblock \doi{10.1007/s10462-021-09996-w}.

\bibitem[Hale and Verduyn~Lunel(1993)]{hale1993introduction}
Jack~K Hale and Sjoerd~M Verduyn~Lunel.
\newblock \emph{Introduction to Functional Differential Equations}, volume~99
  of \emph{Applied Mathematical Sciences}.
\newblock Springer, 1993.

\bibitem[Hassard et~al.(1981)Hassard, Kazarinoff, and Wan]{hassard1981theory}
Brian~D Hassard, Nicholas~D Kazarinoff, and Yieh-Hei Wan.
\newblock \emph{Theory and Applications of Hopf Bifurcation}, volume~41 of
  \emph{London Mathematical Society Lecture Note Series}.
\newblock Cambridge University Press, 1981.

\bibitem[Hofbauer and Sigmund(1998)]{hofbauer1998evolutionary}
Josef Hofbauer and Karl Sigmund.
\newblock \emph{Evolutionary Games and Population Dynamics}.
\newblock Cambridge University Press, 1998.

\bibitem[Holland et~al.(1983)Holland, Laskey, and
  Leinhardt]{holland1983stochastic}
Paul~W. Holland, Kathryn~Blackmond Laskey, and Samuel Leinhardt.
\newblock Stochastic blockmodels: First steps.
\newblock \emph{Social Networks}, 5\penalty0 (2):\penalty0 109--137, 1983.
\newblock \doi{10.1016/0378-8733(83)90021-7}.

\bibitem[Iijima(2012)]{iijima2012delayed}
Ryota Iijima.
\newblock On delayed discrete evolutionary dynamics.
\newblock \emph{Journal of Theoretical Biology}, 300:\penalty0 1--6, 2012.
\newblock \doi{10.1016/j.jtbi.2012.01.001}.

\bibitem[Kuang(1993)]{kuang1993delay}
Yang Kuang.
\newblock \emph{Delay Differential Equations with Applications in Population
  Dynamics}.
\newblock Academic Press, 1993.

\bibitem[Lieberman et~al.(2005)Lieberman, Hauert, and
  Nowak]{lieberman2005evolutionary}
Erez Lieberman, Christoph Hauert, and Martin~A. Nowak.
\newblock Evolutionary dynamics on graphs.
\newblock \emph{Nature}, 433\penalty0 (7023):\penalty0 312--316, 2005.
\newblock \doi{10.1038/nature03204}.

\bibitem[McAvoy and Allen(2021)]{mcavoy2022social}
Alex McAvoy and Benjamin Allen.
\newblock Fixation probabilities in evolutionary dynamics under weak selection.
\newblock \emph{Journal of Mathematical Biology}, 82:\penalty0 14, 2021.
\newblock \doi{10.1007/s00285-021-01568-4}.

\bibitem[Miekisz(2008)]{miekisz2008evolutionary}
Jacek Miekisz.
\newblock Evolutionary game theory and population dynamics.
\newblock In \emph{Multiscale Problems in the Life Sciences}, volume 1940 of
  \emph{Lecture Notes in Mathematics}, pages 269--316. Springer, 2008.
\newblock \doi{10.1007/978-3-540-78362-6_5}.

\bibitem[Mittal et~al.(2020)Mittal, Mukhopadhyay, and
  Chakraborty]{mittal2020delayed}
Sourabh Mittal, Archan Mukhopadhyay, and Sagar Chakraborty.
\newblock Evolutionary dynamics of the delayed replicator--mutator equation:
  Limit cycle and cooperation.
\newblock \emph{Physical Review E}, 101\penalty0 (4):\penalty0 042410, 2020.
\newblock \doi{10.1103/PhysRevE.101.042410}.

\bibitem[Mohamadichamgavi and Bodnar(2025)]{gao2025bifurcation}
Javad Mohamadichamgavi and Marek Bodnar.
\newblock Bifurcation analysis of replicator dynamics with logistic growth and
  strategy-dependent time delays in snowdrift game.
\newblock \emph{Dynamic Games and Applications}, 2025.
\newblock \doi{10.1007/s13235-025-00671-1}.

\bibitem[Nowak(2006)]{nowak2006five}
Martin~A. Nowak.
\newblock Five rules for the evolution of cooperation.
\newblock \emph{Science}, 314\penalty0 (5805):\penalty0 1560--1563, 2006.
\newblock \doi{10.1126/science.1133755}.

\bibitem[Ohtsuki et~al.(2006)Ohtsuki, Hauert, Lieberman, and
  Nowak]{ohtsuki2006simple}
Hisashi Ohtsuki, Christoph Hauert, Erez Lieberman, and Martin~A. Nowak.
\newblock A simple rule for the evolution of cooperation on graphs and social
  networks.
\newblock \emph{Nature}, 441\penalty0 (7092):\penalty0 502--505, 2006.
\newblock \doi{10.1038/nature04605}.

\bibitem[Perc et~al.(2013)Perc, G{\'o}mez-Garde{\~n}es, Szolnoki, Flor{\'i}a,
  and Moreno]{perc2013evolutionary}
Matja{\v{z}} Perc, Jes{\'u}s G{\'o}mez-Garde{\~n}es, Attila Szolnoki, Luis~M.
  Flor{\'i}a, and Yamir Moreno.
\newblock Evolutionary dynamics of group interactions on structured
  populations: A review.
\newblock \emph{Journal of the Royal Society Interface}, 10\penalty0
  (80):\penalty0 20120997, 2013.
\newblock \doi{10.1098/rsif.2012.0997}.

\bibitem[Perc et~al.(2017)Perc, Jordan, Rand, Wang, Boccaletti, and
  Szolnoki]{perc2017statistical}
Matja{\v{z}} Perc, Jillian~J. Jordan, David~G. Rand, Zhen Wang, Stefano
  Boccaletti, and Attila Szolnoki.
\newblock Statistical physics of human cooperation.
\newblock \emph{Physics Reports}, 687:\penalty0 1--51, 2017.
\newblock \doi{10.1016/j.physrep.2017.05.004}.

\bibitem[Santos and Pacheco(2005)]{santos2005scalefree}
Francisco~C. Santos and Jorge~M. Pacheco.
\newblock Scale-free networks provide a unifying framework for the emergence of
  cooperation.
\newblock \emph{Physical Review Letters}, 95\penalty0 (9):\penalty0 098104,
  2005.
\newblock \doi{10.1103/PhysRevLett.95.098104}.

\bibitem[Scheffer(2009)]{scheffer2009critical}
Marten Scheffer.
\newblock \emph{Critical Transitions in Nature and Society}.
\newblock Princeton University Press, 2009.

\bibitem[Sutton and Barto(2018)]{sutton2018reinforcement}
Richard~S. Sutton and Andrew~G. Barto.
\newblock \emph{Reinforcement Learning: An Introduction}.
\newblock MIT Press, 2018.

\bibitem[Szab{\'o} and F{\'a}th(2007)]{szabo2007evolutionary}
Gy{\"o}rgy Szab{\'o} and G{\'a}bor F{\'a}th.
\newblock Evolutionary games on graphs.
\newblock \emph{Physics Reports}, 446\penalty0 (4--6):\penalty0 97--216, 2007.
\newblock \doi{10.1016/j.physrep.2007.04.004}.

\bibitem[Taylor and Jonker(1978)]{taylor1978evolutionary}
Peter~D. Taylor and Leo~B. Jonker.
\newblock Evolutionary stable strategies and game dynamics.
\newblock \emph{Mathematical Biosciences}, 40\penalty0 (1--2):\penalty0
  145--156, 1978.
\newblock \doi{10.1016/0025-5564(78)90077-9}.

\bibitem[Traulsen et~al.(2006)Traulsen, Nowak, and
  Pacheco]{traulsen2006stochastic}
Arne Traulsen, Martin~A Nowak, and Jorge~M Pacheco.
\newblock Stochastic dynamics of invasion and fixation.
\newblock \emph{Physical Review E}, 74\penalty0 (1):\penalty0 011909, 2006.
\newblock \doi{10.1103/PhysRevE.74.011909}.

\bibitem[Wesson and Rand(2016)]{wesson2016hopfRPS}
Elizabeth Wesson and Richard Rand.
\newblock Hopf bifurcations in delayed rock--paper--scissors replicator
  dynamics.
\newblock \emph{Dynamic Games and Applications}, 6\penalty0 (1):\penalty0
  139--156, 2016.
\newblock \doi{10.1007/s13235-015-0138-2}.

\bibitem[Wesson et~al.(2016)Wesson, Rand, and Rand]{wesson2016hopf}
Elizabeth Wesson, Richard~H. Rand, and David~G. Rand.
\newblock Hopf bifurcations in two-strategy delayed replicator dynamics.
\newblock \emph{International Journal of Bifurcation and Chaos}, 26\penalty0
  (1):\penalty0 1650006, 2016.
\newblock \doi{10.1142/S0218127416500061}.

\bibitem[Wettergren(2023)]{wettergren2023replicator}
Thomas~A. Wettergren.
\newblock Replicator dynamics of evolutionary games with different delays on
  costs and benefits.
\newblock \emph{Applied Mathematics and Computation}, 458:\penalty0 128228,
  2023.
\newblock \doi{10.1016/j.amc.2023.128228}.

\bibitem[Yan et~al.(2021)Yan, Chen, Qiu, and Szolnoki]{yan2021cooperator}
Fang Yan, Xiaojie Chen, Zhipeng Qiu, and Attila Szolnoki.
\newblock Cooperator driven oscillation in a time-delayed feedback-evolving
  game.
\newblock \emph{New Journal of Physics}, 23:\penalty0 053017, 2021.
\newblock \doi{10.1088/1367-2630/abf205}.

\bibitem[Zhang et~al.(2021)Zhang, Yang, and Ba{\c{s}}ar]{zhang2021marl_survey}
Kaiqing Zhang, Zhuoran Yang, and Tamer Ba{\c{s}}ar.
\newblock Multi-agent reinforcement learning: A selective overview of theories
  and algorithms.
\newblock In \emph{Handbook of Reinforcement Learning and Control}, Studies in
  Systems, Decision and Control, pages 321--384. Springer, 2021.

\bibitem[Zhang et~al.(2023)Zhang, Zhang, Gu, and Li]{zhang2023marl}
Yuyang Zhang, Runyu Zhang, Yuantao Gu, and Na~Li.
\newblock Multi-agent reinforcement learning with reward delays.
\newblock In \emph{Proceedings of the 5th Annual Learning for Dynamics and
  Control Conference (L4DC)}, volume 211 of \emph{PMLR}, pages 692--704, 2023.

\end{thebibliography}
